\theoremstyle{plain}
\newtheorem{theorem}{Theorem}[section]
\newtheorem{proposition}[theorem]{Proposition}
\newtheorem{lemma}[theorem]{Lemma}
\newtheorem{corollary}[theorem]{Corollary}
\theoremstyle{definition}
\newtheorem{definition}[theorem]{Definition}
\newtheorem{example}[theorem]{Example}
\begin{document}

\title{Investigation into the role of the Laurent property in integrability}
\author[T. Mase]{Takafumi Mase}
\address{Graduate School of Mathematical Sciences, 
the University of Tokyo, 3-8-1 Komaba, Meguro-ku, Tokyo 153-8914, Japan.}

\begin{abstract}
We study the Laurent property for autonomous and nonautonomous discrete equations.
First we show, without relying on the caterpillar lemma, the Laurent property for the Hirota-Miwa and the discrete BKP equations.
Next we introduce the notion of reductions and gauge transformations for discrete bilinear equations and we prove that these preserve the Laurent property.
Using these two techniques, we obtain the explicit condition on the coefficients of a nonautonomous discrete bilinear equation for it to possess the Laurent property.
Finally we study the denominators of the iterates of an equation with the Laurent property and we show that any reduction to a mapping on a one-dimensional lattice of a nonautonomous Hirota-Miwa equation or discrete BKP equation, with the Laurent property, has zero algebraic entropy.
\end{abstract}

\maketitle


\section{Introduction}\label{sec:intro}

The Laurent property is a concept that arose from the study of cluster algebras, introduced by Fomin and Zelevinsky \cite{cluster1} and recently used in many mathematical fields.
A cluster algebra is a commutative ring with some characteristic generators, which are called cluster variables.
It is known that any cluster variable can be expressed as a Laurent polynomial of the initial cluster variables.
This is called the {\em Laurent phenomenon} or the {\em Laurent property.}

In this paper we shall consider the Laurent property for discrete dynamical systems.
The natural definition of this property is as follows.

\begin{definition}[Laurent property]
An initial value problem of a discrete system has the Laurent property if every iterate can be expressed as a Laurent polynomial of the initial values.
\end{definition}

\begin{example}\label{exa:ex1}
Consider the recurrence
\[
\left\{\begin{array}{l}
	f_{m} = \dfrac{f^2_{m-1} + \beta}{f_{m-2}}, \\
	f_0 = X, f_1 = Y,
\end{array}\right.
\]
where $\beta$ is a nonzero parameter.

The Laurent property of this system can be easily shown.
It is clear that $f_2$ and $f_3$ are Laurent polynomials of $X$ and $Y$.
For $m \ge 4$, the relation $f_{m-1}f_{m-3} = f^2_{m-2} + \beta$ implies the coprimeness of $f_{m-2}$ and $f_{m-3}$ as Laurent polynomials since if $g$ divides $f_{m-2}$ and $f_{m-3}$, then $\beta \equiv 0 \mod g$.
Thus the second equality in
\[
	f_m = \frac{f^2_{m-1} + \beta}{f_{m-2}} = \frac{f^3_{m-2} + 2 \beta f_{m-2} + \beta f_{m-4}}{f^2_{m-3}}
\]
shows that $f_m$ is a Laurent polynomial.
\end{example}

\begin{example}
Consider the lattice equation, which is in fact a discrete form of the Liouville equation \cite{hirotaliou}.
\[
\left\{\begin{array}{l}
	f_{lm} = \dfrac{f_{l, m-1}f_{l-1,m} + \beta}{f_{l-1, m-1}}, \\
	f_{l0} = X_{l0}, f_{0m} = X_{0m},
\end{array}\right.
\]
where $\beta$ is a nonzero parameter.
We can show the Laurent property of this system in the same way as above.
\end{example}

In \cite{fz}, Fomin and Zelevinsky studied several recurrence relations for which they showed the Laurent property by using the so-called ``caterpillar lemma,'' which is used to prove the Laurent phenomenon for cluster algebras.
In \cite{rims}, we have discussed, without proofs, a close relation between the Laurent property and discrete integrable systems.
One aim of this paper is to give proofs of the theorems in \cite{rims}, proofs, it must be stressed, that do not make use of the caterpillar lemma.

Another aim is to discuss the Laurent property for nonautonomous systems.
As shown in the following example, nonautonomous equations can have the Laurent property when their parameters satisfy some specific relations.

\begin{example}\label{exa:ex3}
Let $r$ be a positive integer and $\alpha_m, \beta_m$ be nonzero parameters depending on $m$.
Then, the equation
\[
\left\{\begin{array}{l}
	f_{m} = \dfrac{\alpha_m f^r_{m-1} + \beta_m}{f_{m-2}}, \\
	f_0 = X, f_1 = Y
\end{array}\right.
\]
has the Laurent property if and only if $\alpha_m$ and $\beta_m$ satisfy
\[
	\alpha_m \alpha_{m-2}\beta^r_{m-1} = \beta_m \beta_{m-2}
\]
for $m \ge 4$.
This relation arises from the following calculation:
\[
	\alpha_m f^r_{m-1} + \beta_m \equiv \frac{\alpha_m \alpha_{m-2} \beta^r_{m-1} - \beta_m \beta_{m-2}}{\alpha_{m-2}f^r_{m-3}} \mod f_{m-2}.
\]
\end{example}

Among the wide variety of possible nonautonomous discrete systems, we shall mainly study so-called bilinear systems \cite{hirotakdv, hirotatoda}.
In \textsection\ref{sec:dbil}, we give a brief introduction to these systems and their initial value problems.
Some notations used throughout the paper, which are convenient when we consider the Laurent property and reductions, are also introduced.
At the end of the section, we give elementary proofs of the Laurent property of the autonomous Hirota-Miwa equation and the autonomous discrete BKP equation, which do not rely on the caterpillar lemma.

Discrete bilinear equations are usually studied in the context of integrable systems, and most of them are obtained as reductions of the Hirota-Miwa equation or the discrete BKP equation.
In \textsection\ref{sec:reduc}, we introduce the notion of a reduction of a nonautonomous bilinear equation and we show that the Laurent property is preserved by reductions (Proposition~\ref{prop:reduc}).

Gauge transformations, i.e.\ transformations in which the dependent variable is multiplied by a non-vanishing global function, can be applied to nonautonomous systems in order to change the parametric dependence of the equation.
In \textsection\ref{sec:gauge}, we introduce gauge transformations and show that they preserve the Laurent property as well (Proposition~\ref{prop:gauge}).

In \textsection\ref{sec:main}, we give the general conditions on the coefficients of the nonautonomous Hirota-Miwa equation (and the nonautonomous discrete BKP equation) for it to have the Laurent property and we show that these coincide with necessary and sufficient conditions for the equation to be transformable to an autonomous system by gauge transformation (Theorems~\ref{thm:mainhm} and~\ref{thm:mainbkp}).

The denominator of the solution to an equation with the Laurent property is easy to investigate since it is a monomial.
In \textsection\ref{sec:deg}, we study the structure of such denominators and we discuss the algebraic entropy of the corresponding equations, especially in the case of discrete bilinear equations.

Throughout the paper, except in \textsection\ref{sec:deg}, we consider all equations over a base field $k$ with an arbitrary characteristic.
While we consider only the case $k = \mathbb{R}$ in \textsection\ref{sec:deg} for technical reasons, the final result (Theorem~\ref{thm:ent}) still holds over any base field.


\section{Discrete bilinear equations}\label{sec:dbil}

In this section, we introduce what are called discrete bilinear equations \cite{hirotakdv, hirotatoda, miwa}, as well as their initial value problems.
Since a detailed description of the latter is provided in \cite{rims}, we give only the essentials.

Let $L$ be a lattice (free $\mathbb{Z}$ module of finite rank), $\alpha^{(i)}_h \in k^{\times}$ parameters, and let $v_i, u_i, w \in L$, such that $v_i + u_i = w,$ generate $L$ as a lattice.
Then, most of the discrete bilinear equations that appear in the literature can be written as
\begin{equation}\label{eq:dbil}
	f_h = \frac{\alpha^{(1)}_h f_{h+v_1}f_{h+u_1} + \cdots + \alpha^{(n)}_h f_{h+v_n}f_{h+u_n}}{f_{h+w}}.
\end{equation}
For this equation to define a proper evolution it is necessary to require $v_i, u_i$ to be $\mathbb{Z}_{\ge 0}$-linearly independent.
Moreover, we demand that $v_i \ne v_j, u_i \ne u_j, v_i \ne u_j$ for any $i \ne j$, in case we choose to decrease the number of terms of the equation.

\begin{example}[Hirota-Miwa equation \cite{hirota, miwa}]\label{exa:hm}
The autonomous Hirota-Miwa equation (the discrete KP equation) is usually written as
\[
	\alpha f_{l-1,m,n} f_{l,m-1,n-1} + \beta f_{l,m-1,n} f_{l-1,m,n-1} + \gamma f_{l,m,n-1} f_{l-1,m-1,n} = 0,
\]
where $f$ is the dependent variable, $l, m, n$ are the independent variables and $\alpha, \beta, \gamma$ are nonzero parameters.
Taking $L = \mathbb{Z}^3, h = (l, m, n), v_1 = (-1, 0, 0), u_1 = (0, 1, -1), v_2 = (0, 0, -1), u_2 = (-1, 1, 0), w = (-1, 1, -1)$ and redefining the parameters appropriately, the equation can be written as
\[
	f_h = \frac{\alpha f_{h+v_1}f_{h+u_1} + \beta f_{h+v_2}f_{h+u_2}}{f_{h+w}}.
\]
Its nonautonomous form is
\[
	f_h = \frac{\alpha_h f_{h+v_1}f_{h+u_1} + \beta_h f_{h+v_2}f_{h+u_2}}{f_{h+w}},
\]
which we shall call the nonautonomous Hirota-Miwa equation.
\end{example}

\begin{example}[discrete BKP equation]\label{exa:bkp}
The autonomous discrete BKP equation (the Miwa equation) is written as follows \cite{miwa}:
\[
	\alpha f_{lmn}f_{l-1,m-1,n-1} + \beta f_{l-1,m,n} f_{l,m-1,n-1} + \gamma f_{l,m-1,n} f_{l-1,m,n-1} + \delta f_{l,m,n-1} f_{l-1,m-1,n} = 0.
\]
In a similar way as above, we can rewrite the equation in the form
\[
	f_h = \frac{\alpha f_{h+v_1} f_{h+u_1} + \beta f_{h+v_2} f_{h+u_2} + \gamma f_{h+v_3} f_{h+u_3}}{f_{h+w}},
\]
where $v_1 = (-1, 0, 0), u_1 = (0, -1, -1), v_2 = (0, -1, 0), u_2 = (-1, 0, -1), v_3 = (0, 0, -1), u_3 = (-1, -1, 0), w = (-1, -1, -1)$.
Its nonautonomous form is
\begin{equation}\label{eq:dbkp}
	f_h = \frac{\alpha_h f_{h+v_1} f_{h+u_1} + \beta_h f_{h+v_2} f_{h+u_2} + \gamma_h f_{h+v_3} f_{h+u_3}}{f_{h+w}},
\end{equation}
which we call the nonautonomous discrete BKP equation.
\end{example}

\begin{example}[discrete KdV equation]\label{exa:dkdv}
The bilinear form of the autonomous discrete KdV equation is written as follows \cite{hirotakdv}:
\[
	(1 + \delta)f_{t,n}f_{t-2,n-1} - f_{t-1,n}f_{t-1,n-1} - \delta f_{t,n-1}f_{t-2,n} = 0.
\]
In a similar way as above, we have the nonautonomous discrete KdV equation:
\[
	f_h = \frac{\alpha_h f_{h+v_1}f_{h+u_1} + \beta_h f_{h+v_2}f_{h+u_2}}{f_{h+w}},
\]
with $L = \mathbb{Z}^2, v_1 = (-1, 0), u_1 = (-1, -1), v_2 = (0, -1), u_2 = (-2, 0), w = (-2, -1)$.
\end{example}

In order to study the Laurent property of an equation on a multi-dimensional lattice, it is necessary to introduce an appropriate initial value problem.
Following \cite{fz},

\begin{definition}[good domain]
A nonempty subset $H \subset L$ is called a good domain if it satisfies the following two conditions:
\begin{itemize}
\item
For any $h \in H$, the set
\[
	 H \cap \{ h + a_1 v_1 + \cdots + a_n v_n + b_1 u_1 + \cdots + b_n u_n \, | \, a_i, b_i \in \mathbb{Z}_{\ge 0} \}
\]
is finite.
\item
$h - \sum a_i v_i - \sum b_i u_i \in H$ for all $h \in H$ and for arbitrary nonnegative integers $a_i, b_i$.
\end{itemize}
When $H$ is a good domain, we shall call
\[
	H_0 = \{ h \in H \, | \, \text{ at least one of the } h + v_i, h + u_i, h + w \text{ do not belong to } H \}
\]
the initial domain for $H$.
It immediately follows from the conditions on a good domain that $H_0$ coincides with $\{ h \in H \, | \, h + w \notin H \}$.
\end{definition}

\begin{example}
$H = \{ (l, m, n) \in \mathbb{Z}^3 \, | \, l, m, n \ge 0 \}$ is a good domain for the Hirota-Miwa and the discrete BKP equations.
The initial domain for $H$ is $\{ (l, m, n) \in H \, | \, lmn = 0 \}$.
\end{example}

\begin{definition}[Laurent property]
Let $H \subset L$ be a good domain and consider the initial value problem
\[
	f_h = \begin{cases}
		\dfrac{\alpha^{(1)}_h f_{h+v_1}f_{h+u_1} + \cdots + \alpha^{(n)}_h f_{h+v_n}f_{h+u_n}}{f_{h+w}} & (h \in H \setminus H_0) \\
		X_h & (h \in H_0).
	\end{cases}
\]
This initial value problem has the Laurent property if all $f_h$ are Laurent polynomials of the initial values $X_{h_0} (h_0 \in H_0)$.
A discrete bilinear equation has the Laurent property if for any good domain $H \subset \mathbb{Z}$, the corresponding initial value problem has the Laurent property.
\end{definition}

It should be noted that the autonomous Hirota-Miwa equation is essentially the same as the ``octahedron recurrence'' in \cite{fz}, the autonomous discrete BKP equation as the ``cube recurrence'' and the autonomous discrete KdV equation as the ``knight recurrence'' in that reference.
Since the definitions concerning domains are essentially those of \cite{fz}, the facts proved there can be rephrased as follows.

\begin{theorem}[Fomin-Zelevinski \cite{fz}]\label{thm:fz}
In the autonomous cases, the Hirota-Miwa equation, the discrete BKP equation and the discrete KdV equation have the Laurent property.
\end{theorem}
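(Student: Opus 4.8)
The quickest route is to observe that, under the dictionary set up in Examples~\ref{exa:hm}, \ref{exa:bkp} and~\ref{exa:dkdv}, the three equations are exactly the octahedron, cube and knight recurrences of \cite{fz}, and that the good domains defined above are precisely the admissible initial regions for which \cite{fz} establishes the Laurent property. To prove the theorem in this way one would only need to check the dictionary carefully—matching the stencils $v_i, u_i, w$ and verifying that the two conditions defining a good domain coincide with the finiteness and down-closedness hypotheses used in \cite{fz}—and then invoke their result. Since the present paper aims to dispense with the caterpillar lemma, however, I would instead give a self-contained induction, of the type already illustrated in Example~\ref{exa:ex1}.

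The plan for the self-contained proof is an induction over $H$ ordered by a linear functional $\lambda \colon L \to \mathbb{Z}$ chosen so that $\lambda(v_i) > 0$ and $\lambda(u_i) > 0$ for every $i$; since $v_i + u_i = w$ this forces $\lambda(w) > 0$, so that in the recurrence every argument on the right-hand side has strictly smaller $\lambda$-value than $f_h$. The two good-domain axioms guarantee that only finitely many points lie below any prescribed height, so the induction is well-founded, with base case the initial domain $H_0$, where the values $f_h = X_h$ are trivially Laurent. The inductive step must prove a strengthened statement: that each $f_h$ is a Laurent polynomial \emph{and} that the iterates attached to neighbouring lattice points are pairwise coprime in the Laurent ring $k[X_{h_0}^{\pm 1}]$, which is a unique factorization domain.

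The heart of the argument is a ``double substitution'' generalizing Example~\ref{exa:ex1}: starting from $f_h f_{h+w} = \sum_i \alpha^{(i)}_h f_{h+v_i} f_{h+u_i}$, I would substitute the recurrence satisfied by one factor of the numerator so as to replace the denominator $f_{h+w}$ by a monomial in iterates for which coprimeness with the new numerator is already available by the inductive hypothesis. Divisibility in a unique factorization domain then permits the cancellation, exhibiting $f_h$ as a Laurent polynomial. The coprimeness half of the hypothesis is propagated by the same mechanism as in Example~\ref{exa:ex1}: any common factor $g$ of two neighbouring iterates would, after reduction through the recurrence, be forced to divide a nonzero constant assembled from the parameters $\alpha^{(i)}_h$, which is impossible.

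I expect the main obstacle to be closing this coprimeness induction in the genuinely multidimensional stencils of the Hirota-Miwa and discrete BKP equations. Unlike the single chain of Example~\ref{exa:ex1}, one must track coprimeness simultaneously among several pairs of neighbours and verify that the substitution chosen at $h$ does not spoil the relations needed at other points of the stencil; organizing this bookkeeping so that it closes cleanly is the delicate part. The discrete KdV case is most economically obtained not by a direct repetition of this argument but afterwards as a reduction, using the fact—announced as Proposition~\ref{prop:reduc}—that reductions preserve the Laurent property.
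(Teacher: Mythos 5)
Your skeleton matches the paper's: a well-founded induction on the good domain, a strengthened inductive hypothesis combining Laurentness with coprimeness-type information, a substitution argument to exhibit divisibility of the numerator by $f_{h+w}$, and the discrete KdV case deferred to Proposition~\ref{prop:reduc}. But the step you yourself flag as ``the delicate part'' is a genuine gap, and the mechanism you propose for it cannot work. Your coprimeness propagation copies Example~\ref{exa:ex1}: a common factor $g$ of two neighbouring iterates is reduced through the recurrence until it must divide a nonzero constant. That trick depends on the equation having an additive constant term ($f_{m-1}f_{m-3}-f_{m-2}^2=\beta$). The Hirota-Miwa and discrete BKP equations are bilinear with no constant term: every term is a product of two iterates, so eliminating iterates never isolates a constant, and a putative common factor of, say, $f_{h+v_1}$ and $f_{h+u_1}$ is only ever seen to divide further products of iterates. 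No amount of bookkeeping closes this induction as stated.

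The paper resolves this differently, and the difference is the crux of its proof. Instead of coprimeness, it carries \emph{irreducibility} of each iterate $f_h$ ($h \in H\setminus H_0$) as the inductive hypothesis, proved by a structural observation rather than by reduction through the recurrence: letting $m$ be minimal with $h+mv_1\in H_0$, one shows
\[
	f_h = \frac{\alpha^m f_{h+u_1}}{X_{h+mv_1+u_1}}\,X_{h+mv_1} + g,
\]
i.e.\ $f_h$ is \emph{linear} in the far-corner initial variable $X_{h+mv_1}$, with coefficient a unit multiple of $f_{h+u_1}$, and $g$, $f_{h+u_1}$ independent of that variable. This forces any nontrivial factorization of $f_h$ to have $f_{h+u_1}$ as a factor, which is then excluded by the analogous representation in the $v_2$-direction together with coprimeness of the irreducibles $f_{h+u_1}$ and $f_{h+u_2}$. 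Pairwise coprimeness of the iterates needed for the Laurent step then comes for free (irreducible and non-associate). For the Laurent step itself, your ``double substitution'' is the right instinct, but note the specific identity that makes it close: after substituting the recurrences for all four (resp.\ six) factors and working modulo $f_{h+w}$ in the ring localized at the neighbours $f_{h+w+v_i},f_{h+w+u_i}$, the numerator reassembles into a multiple of $\alpha f_{h+w+v_1}f_{h+w+u_1}+\beta f_{h+w+v_2}f_{h+w+u_2}=f_{h+w}f_{h+2w}\equiv 0$, i.e.\ the equation itself re-evaluated at $h+w$; for the discrete BKP equation the analogous computation is substantially longer but ends the same way. Without the irreducibility device and this identity, your outline does not yet constitute a proof.
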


Since the caterpillar lemma which is used in \cite{fz} to show the theorem is powerful but complicated, we prefer to give an elementary proof in the case of the Hirota-Miwa equation and the discrete BKP equation.
The Laurent property of the discrete KdV equation will follow from Proposition~\ref{prop:reduc}.

Before proving Theorem~\ref{thm:fz}, we review some basic knowledge of rings and we introduce a semi-order on $H$.

\begin{definition}
Let $A$ be a UFD (unique factorization domain).
A non-unit $g \in A$ is irreducible if $g$ cannot be written as a product of two non-units.
Two elements $f, g \in A$ are coprime if they have no common factors besides units.
Note that if $f$ and $g$ are irreducible then they are coprime, unless there exists a unit $x$ such that $f = xg$.
\end{definition}

\begin{definition}\label{defi:order}
Consider a discrete bilinear equation of the form (\ref{eq:dbil}).
Let $H \subset L$ be a good domain and define a semi-order $\le$ on $H$ by
\[
	h_1 \le h_2 \quad
	\Leftrightarrow \quad
	h_1 - h_2 \in \operatorname{span}_{\mathbb{Z}_{\ge 0}}(v_i, u_i).
\]
The first condition for a good domain guarantees that any nonempty subset of $H$ has a (possibly non-unique) minimal element.
Therefore we can use induction on $h \in H$ with respect to $\le$.
\end{definition}

\begin{example}
In the case of the discrete BKP equation (\ref{eq:dbkp}), the semi-order $\le$ can be written as follows:
\[
	h_1 = (l_1, m_1, n_1) \le h_2 = (l_2, m_2, n_2) \quad
	\Leftrightarrow \quad
	l_1 \le l_2, m_1 \le m_2, n_1 \le n_2.
\]
\end{example}

\begin{proof}[Elementary Proof of Theorem~\ref{thm:fz} (case of the Hirota-Miwa equation)]

Let $H \subset \mathbb{Z}^3$ be a good domain and consider the initial value problem
\[
	f_h = \begin{cases}
	\dfrac{\alpha f_{h+v_1}f_{h+u_1} + \beta f_{h+v_2}f_{h+u_2}}{f_{h+w}}	&	(h \in H \setminus H_0), \\
	X_h	&	(h \in H_0).
	\end{cases}
\]
Let $\le$ be the semi-order defined in Definition~\ref{defi:order} and let $A = k[X_{h_0}, X^{-1}_{h_0} \, | \, h_0 \in H_0]$ be the Laurent polynomial ring of initial values.

Let us show the following two statements by induction on $h \in H$:
\begin{itemize}
\item[(1)]
$f_h \in A$.
\item[(2)]
If $h \in H \setminus H_0$, then $f_h$ is an irreducible element of $A$.
\end{itemize}

First we show (1).
Since the case $h + w \in H_0$ is trivial, we may assume that $h + w \in H \setminus H_0$.
Let $F = \alpha f_{h+v_1} f_{h+u_1} + \beta f_{h+v_2} f_{h+u_2}$ and
\[
	A' = A[f^{-1}_{h+w+v_1}, f^{-1}_{h+w+u_1}, f^{-1}_{h+w+v_2}, f^{-1}_{h+w+u_2}] / (f_{h+w}).
\]
By the induction hypothesis (2), $f_{h+w}, f_{h+w+v_1}, f_{h+w+u_1}, f_{h+w+v_2}, f_{h+w+u_2}$ are irreducible in $A$ and pairwise coprime.
Thus, it is sufficient to show that $F = 0$ in the quotient ring $A'$.
Using
\[
	v_1 + u_1 = v_2 + u_2 = w,
\]
we have
\begin{align*}
	f_{h+v_1} &= \frac{\beta f_{h+v_1+v_2}f_{h+v_1+u_2}}{f_{h+w+v_1}}, &
	f_{h+u_1} &= \frac{\beta f_{h+u_1+v_2}f_{h+u_1+u_2}}{f_{h+w+u_1}}, \\
	f_{h+v_2} &= \frac{\alpha f_{h+v_1+v_2}f_{h+u_1+v_2}}{f_{h+w+v_2}}, &
	f_{h+u_2} &= \frac{\alpha f_{h+v_1+u_2}f_{h+u_1+u_2}}{f_{h+w+u_2}},
\end{align*}
and
\begin{align*}
	F	&=	\alpha\beta^2\frac{f_{h+v_1+v_2}f_{h+v_1+u_2}f_{h+u_1+v_2}f_{h+u_1+u_2}}{f_{h+w+v_1}f_{h+w+u_1}} + \alpha^2\beta\frac{f_{h+v_1+v_2}f_{h+v_1+u_2}f_{h+u_1+v_2}f_{h+u_1+u_2}}{f_{h+w+v_2}f_{h+w+u_2}} \\
	&=	\frac{\alpha\beta f_{h+v_1+v_2}f_{h+v_1+u_2}f_{h+u_1+v_2}f_{h+u_1+u_2}}{f_{h+w+v_1}f_{h+w+u_1}f_{h+w+v_2}f_{h+w+u_2}}\left(\alpha f_{h+w+v_1}f_{h+w+u_1} + \beta f_{h+w+v_2}f_{h+w+u_2}\right) \\
	&=	\frac{\alpha\beta f_{h+v_1+v_2}f_{h+v_1+u_2}f_{h+u_1+v_2}f_{h+u_1+u_2}}{f_{h+w+v_1}f_{h+w+u_1}f_{h+w+v_2}f_{h+w+u_2}}f_{h+w}f_{h+2w} \\
	&=	0.
\end{align*}

Next, we show (2).
Let $m$ be the smallest integer satisfying $h + mv_1 \in H_0$.
An easy induction on $m$ shows that $f_h$ can be represented as
\[
	f_h = \frac{\alpha^m f_{h+u_1}}{X_{h+mv_1+u_1}}X_{h+mv_1} + g,
\]
where $g$ and $f_{h+u_1}$ do not depend on $X_{h+mv_1}$.
If $h + u_1 \in H_0$, then $\alpha^m f_{h+u_1} / X_{h+mv_1+u_1}$ is a unit in $A$ and thus $f_h$ is irreducible.
On the other hand, if $h + u_1 \in H \setminus H_0$, the induction hypothesis (2) implies the irreducibility of $f_{h+u_1}$.
Thus, it is sufficient to show that $f_h$ cannot be divided by $f_{h+u_1}$.
Let $m'$ be the smallest integer satisfying $h + m'v_2 \in H_0$.
Then, $f_h$ can be represented as
\[
	f_h = \frac{\beta^{m'} f_{h+u_2}}{X_{h+m'v_2+u_2}}X_{h+m'v_2} + g',
\]
where $g'$ and $f_{h+u_2}$ do not depend on $X_{h+m'v_2}$.
Since $f_{h+u_1}$ and $f_{h+u_2}$ are coprime, $f_h$ is not divisible by $f_{h+u_1}$.
\end{proof}

\begin{proof}[Elementary Proof of Theorem~\ref{thm:fz} (the case of discrete BKP equation)]

Let $H \subset \mathbb{Z}^3$ be a good domain and consider the corresponding initial value problem
\[
	f_h = \begin{cases}
	\dfrac{\alpha f_{h+v_1}f_{h+u_1} + \beta f_{h+v_2}f_{h+u_2}{f_{h+w}} + \gamma f_{h+v_3}f_{h+u_3}}{f_{h+w}}	&	(h \in H \setminus H_0), \\
	X_h	&	(h \in H_0).
	\end{cases}
\]
The only difference with the case of the Hirota-Miwa equation is the detail of the proof of (1).

Let
\[
	F = \alpha f_{h+v_1}f_{h+u_1} + \beta f_{h+v_2}f_{h+u_2}{f_{h+w}} + \gamma f_{h+v_3}f_{h+u_3}
\]
and
\[
	A' = A[f^{-1}_{h+w+v_1}, f^{-1}_{h+w+v_2}, f^{-1}_{h+w+v_3}, f^{-1}_{h+w+u_1}, f^{-1}_{h+w+u_2}, f^{-1}_{h+w+u_3}] / (f_{h+w}).
\]
As in the case of the Hirota-Miwa equation, it is sufficient to show that $F = 0$ in the ring $A'$.

Using
\[
	v_1 + u_1 = v_2 + u_2 = v_3 + u_3 = v_1 + v_2 + v_3 = w,
\]
we have
\[
	f_{h+u_1} = \frac{\beta f_{h+u_1+v_2}f_{h+u_1+u_2} + \gamma f_{h+u_1+v_3}f_{h+u_1+u_3}}{f_{h+w+u_1}}
\]
and
\begin{align*}
	f_{h+v_1}	&=	\frac{\beta f_{h+v_1+v_2}f_{h+v_1+u_2} + \gamma f_{h+v_1+v_3}f_{h+v_1+v_3}}{f_{h+w+v_1}} \\
	&=	\frac{\beta f_{h+v_1+u_2}}{f_{h+w+v_1}}f_{h+u_3} + \frac{\gamma f_{h+v_1+u_3}}{{f_{h+w+v_1}}}f_{h+u_2} \\
	&=	\frac{\beta f_{h+v_1+u_2}}{f_{h+w+v_1}f_{h+w+u_3}}(\alpha f_{h+v_1+u_3}f_{h+u_1+u_3} + \beta f_{h+v_2+u_3}f_{h+u_2+u_3}) \\
	& \quad + \frac{\gamma f_{h+v_1+u_3}}{f_{h+w+v_1}f_{h+w+u_2}}(\alpha f_{h+v_1+u_2}f_{h+u_1+u_2} + \gamma f_{h+v_3+u_2}f_{h+u_2+u_3}) \\
	&=	\frac{\beta^2 f_{h+v_1+u_2}f_{h+v_2+u_3}}{f_{h+w+u_3}} + \frac{\gamma^2 f_{h+v_1+u_3}f_{h+v_3+u_2}}{f_{h+w+u_2}} \\
	& \quad + \frac{\alpha f_{h+v_1+u_2}f_{h+v_1+u_3}}{f_{h+w+v_1}f_{h+w+u_2}f_{h+w+u_3}}(\beta f_{h+w+v_2}f_{h+w+u_2} + \gamma f_{h+w+v_3}f_{h+w+u_3}) \\
	&=	\frac{\beta^2 f_{h+v_1+u_2}f_{h+v_2+u_3}f_{h+w+u_2} + \gamma^2 f_{h+v_1+u_3}f_{h+v_3+u_2}f_{h+w+u_3}
	- \alpha^2 f_{h+v_1+u_2}f_{h+v_1+u_3}f_{h+w+u_1}}{f_{h+w+u_2}f_{h+w+u_3}}.
\end{align*}
Therefore the first term of $F$ is
\begin{multline*}
	\alpha f_{h+v_1}f_{h+u_1} = \frac{\alpha}{f_{h+w+u_1}f_{h+w+u_2}f_{h+w+u_3}}\times \\
	\bigg(\beta^3 f_{h+v_1+u_2}f_{h+v_2+u_1}f_{h+v_2+u_3}f_{h+w+v_3}f_{h+w+u_2}
	+ \gamma^3 f_{h+v_1+u_3}f_{h+v_3+u_1}f_{h+v_3+u_2}f_{h+w+v_2}f_{h+w+u_3} \\
	+ \beta^2\gamma f_{h+v_1+u_2}f_{h+v_2+u_3}f_{h+v_3+u_1}f_{h+w+v_2}f_{h+w+u_2}
	+ \beta\gamma^2 f_{h+v_1+u_3}f_{h+v_2+u_1}f_{h+v_3+u_2}f_{h+w+v_3}f_{h+w+u_3} \\
	- \alpha^2\beta f_{h+v_1+u_2}f_{h+v_2+u_1}f_{h+v_1+u_3}f_{h+w+v_3}f_{h+w+u_1}
	- \alpha^2\gamma f_{h+v_1+u_3}f_{h+v_3+u_1}f_{h+v_1+u_2}f_{h+w+v_2}f_{h+w+u_1}\bigg).
\end{multline*}
The other terms are obtained by cyclic permutation of the indices: $1 \to 2 \to 3 \to 1$, $\alpha \to \beta \to \gamma \to \alpha$.
Summing all three terms we obtain
\begin{multline*}
	F = \frac{\alpha\beta\gamma (f_{h+v_1+u_2}f_{h+v_2+u_3}f_{h+v_3+u_1} + f_{h+v_1+u_3}f_{h+v_2+u_1}f_{h+v_3+u_2})}{f_{h+w+u_1}f_{h+w+u_2}f_{h+w+u_3}} \times \\
\Big(\beta f_{h+w+v_2}f_{h+w+u_2} + \gamma f_{h+w+v_3}f_{h+w+u_3} + \alpha f_{h+w+v_1}f_{h+w+u_1}\Big).
\end{multline*}
Since
\begin{align*}
	\beta f_{h+w+v_2}f_{h+w+u_2} + \gamma f_{h+w+v_3}f_{h+w+u_3} + \alpha f_{h+w+v_1}f_{h+w+u_1}	&=	f_{h+w}f_{h+2w} \\
	&=	0,
\end{align*}
finally we have $F = 0$ in $A'$.
\end{proof}

The following corollary follows from the above proofs.

\begin{corollary}
In the case of the autonomous Hirota-Miwa equation and the autonomous discrete BKP equation, every iterate (except the initial values themselves which are units) is irreducible as a Laurent polynomial of the initial values.
\end{corollary}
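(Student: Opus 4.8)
The plan is to read the statement off directly from the inductive proofs of Theorem~\ref{thm:fz}. In proving that theorem for both the Hirota-Miwa and the discrete BKP equations, I established by induction on $h \in H$ with respect to the semi-order $\le$ of Definition~\ref{defi:order} the pair of assertions (1) $f_h \in A$ and (2) if $h \in H \setminus H_0$, then $f_h$ is an irreducible element of $A$. The iterates of the initial value problem are precisely the $f_h$ for $h \in H$, and these split into two classes according to whether $h \in H_0$ or $h \in H \setminus H_0$, so the whole statement should follow by treating these two cases separately.

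First I would dispose of the initial values: for $h_0 \in H_0$ we have $f_{h_0} = X_{h_0}$, which is one of the generators of the Laurent polynomial ring $A = k[X_{h_0}, X^{-1}_{h_0} \mid h_0 \in H_0]$. Since $X_{h_0}$ has the inverse $X^{-1}_{h_0} \in A$, each such iterate is a unit, which accounts for the parenthetical exception in the statement. For the remaining iterates, namely $f_h$ with $h \in H \setminus H_0$, the irreducibility is exactly the content of assertion (2) proved above; moreover, by the definition of irreducibility adopted in the paper an irreducible element is automatically a non-unit, so there is no danger that such an $f_h$ secretly collapses to a unit. Hence every iterate is either a unit (precisely when it is an initial value) or an irreducible element of $A$, which is the claim.

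Since all of the genuine work was already carried out in the proofs of Theorem~\ref{thm:fz}, I do not expect any real obstacle here; the corollary is essentially a repackaging of statement (2). The only point requiring a little care is the purely combinatorial bookkeeping, namely verifying that the iterates indexed by $H_0$ are exactly the units and that those indexed by $H \setminus H_0$ are exactly the ones covered by part (2), with no overlap or omission. This is immediate from the definition of $H_0$ as $\{h \in H \mid h + w \notin H\}$, so the argument reduces to a one-line invocation of the earlier induction.
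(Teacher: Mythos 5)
Your proposal is correct and coincides with the paper's own (one-line) proof: the corollary is read off directly from assertion (2) of the induction in the elementary proofs of Theorem~\ref{thm:fz}, with the initial values $X_{h_0}$ being units of $A$ by construction. No gap here.
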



\section{Reductions}\label{sec:reduc}

A reduction of a discrete bilinear equation is an equation defined on a lower dimensional lattice, obtained by requiring that its solutions be invariant under translations in some direction on the original lattice.

\begin{example}[Hirota-Miwa equation $\to$ discrete KdV equation]\label{ex:hmkdv}
Consider the Hirota-Miwa equation
\[
	f_h = \frac{\alpha_h f_{h+v_1}f_{h+u_1} + \beta_h f_{h+v_2}f_{h+u_2}}{f_{h+w}}.
\]

First, we consider the autonomous case.
Let $\alpha_h = \alpha, \beta_h = \beta$ be constants.
Let $x = v_1 + v_2 - u_1$ and assume that $f_h$ satisfy $f_h = f_{h+x}$ for all $h \in H$.
Then we have $f_{h+u_1} = f_{h+v_1+v_2}, f_{h+u_2} = f_{h+2v_1}$ and $f_{h+w} = f_{h+2v_1+v_2}$.
Let $L'$ be the sublattice of $\mathbb{Z}^3$ spanned by $v_1, v_2$, and $v'_1 = v_1, u'_1 = v_1 + v_2, v'_2 = v_2, u'_2 = 2v_1, w' = 2v_1 + v_2$.
Then, $f_h$ satisfies the discrete KdV equation
\[
	f_h = \frac{\alpha f_{h+v'_1}f_{h+u'_1} + \beta f_{h+v'_2}f_{h+u'_2}}{f_{h'+w'}}
\]
on $L'$.
All the values of $f_h$ on $L$ are recovered from the relation $f_h = f_{h+x}$.

In this procedure, the choice of $L'$ is not essential at all.
Therefore, it is better to consider the discrete KdV equation on the lattice $L / \mathbb{Z}x$ rather than on $L'$.
The discrete KdV equation on $L / \mathbb{Z}x$ is
\[
	f_{\bar{h}} = \frac{\alpha f_{\bar{h}+\bar{v}_1}f_{\bar{h}+\bar{u}_1} + \beta f_{\bar{h}+\bar{v}_2}f_{\bar{h}+\bar{u}_2}}{f_{\bar{h}+\bar{w}}},
\]
where $\bar{y} \in L / \mathbb{Z}x$ is an equivalence class of $y \in L$.
These two discrete KdV equations are essentially the same.
Clearly, taking the isomorphism $L' \to L / \mathbb{Z}x; v_1 \mapsto \bar{v}_1, v_2 \mapsto \bar{v}_2$, we can transfer the first equation to the other.

Next, we consider the nonautonomous case.
In the same way as in the autonomous case, the nonautonomous discrete KdV equation on $L / \mathbb{Z}x$ should be 
\[
	f_{\bar{h}} = \frac{\bar{\alpha}_{\bar{h}} f_{\bar{h}+\bar{v}_1}f_{\bar{h}+\bar{u}_1} + \bar{\beta}_{\bar{h}} f_{\bar{h}+\bar{v}_2}f_{\bar{h}+\bar{u}_2}}{f_{\bar{h}+\bar{w}}},
\]
where $\bar{\alpha}_{\bar{h}} = \alpha_h$ and $\bar{\beta}_{\bar{h}} = \beta_h$.
However, $\bar{\alpha}_{\bar{h}}$ and $\bar{\beta}_{\bar{h}}$ are not well-defined in  general.
To guarantee well-definedness the conditions $\alpha_h = \alpha_{h+x}$ and $\beta_h = \beta_{h+x}$ are necessary.
If these conditions do not hold, the nonautonomous Hirota-Miwa equation cannot be reduced to the nonautonomous discrete KdV equation.
\end{example}

Generalizing this procedure, let us define reductions of discrete bilinear equations as follows.

\begin{definition}[reduction]
A surjective $\mathbb{Z}$-linear map $\varphi \colon L \to L'$ is a reduction of a discrete bilinear equation (\ref{eq:dbil}) if the following three conditions hold:
\begin{itemize}
\item
$\alpha^{(i)}_h$ is ($\ker \varphi$)-invariant, i.e.\ $\alpha^{(i)}_h = \alpha^{(i)}_{h + x}$ for all $h \in H$ and $x \in \ker \varphi$.

\item
$\varphi(v_i), \varphi(u_i)$ are $\mathbb{Z}_{\ge 0}$-linearly independent.

\item
$\varphi(v_i) \ne \varphi(v_j), \varphi(u_i) \ne \varphi(u_j), \varphi(v_i) \ne \varphi(u_j)$ for all $i \ne j$.

\end{itemize}
The equation on $L'$ obtained by a reduction $\varphi$ is
\[
	f'_{h'} = \frac{\alpha'^{(1)}_{h'}f'_{h'+v'_1}f'_{h'+u'_1} + \cdots + \alpha'^{(n)}_{h'}f'_{h'+v'_n}f'_{h'+u'_n}}{f'_{h'+w'}},
\]
where $y' = \varphi(y)$ for $y \in L$ and $\alpha'^{(i)}_{h'} = \alpha^{(i)}_{\varphi^{-1}(h')}$.
The equation obtained by $\varphi$ is also called a reduction.
\end{definition}

The first condition is necessary for the well-definedness of $\alpha^{(i)}_{h'}$.
The second condition is equivalent to the $\mathbb{R}_{\ge 0}$-linear independence of the lattice points $\varphi(v_i), \varphi(u_i)$.
The third condition is such that a reduction does not decrease the number of terms in the equation.

The following proposition gives the conditions for equations to be reductions of the Hirota-Miwa equation or the discrete BKP equation.

\begin{proposition}
\begin{itemize}
\item[(1)]
Every discrete bilinear equation of the form (\ref{eq:dbil}) with three terms
\[
	f'_{h'} = \frac{\alpha'_{h'} f'_{h'+v'_1} f'_{h'+u'_1} + \beta'_{h'} f'_{h'+v'_2} f'_{h'+u'_2}}{f'_{h'+w'}}
\]
can be obtained as a reduction of the Hirota-Miwa equation.

\item[(2)]
A discrete bilinear equation of the form (\ref{eq:dbil}) with four terms
\begin{equation}\label{eq:bkpreduc}
	f'_{h'} = \frac{\alpha'_{h'} f'_{h'+v'_1} f'_{h'+u'_1} + \beta'_{h'} f'_{h'+v'_2} f'_{h'+u'_2} + \gamma'_{h'} f'_{h'+v'_3} f'_{h'+u'_3}}{f'_{h'+w'}}
\end{equation}
can be obtained as a reduction of the discrete BKP equation if and only if there exist three mutually distinct shifts $x, y, z \in \{ v'_1, v'_2, v'_3, u'_1, u'_2, u'_3 \}$ such that $x + y + z = w'$.
\end{itemize}
\end{proposition}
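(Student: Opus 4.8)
The plan is to construct the required reductions by prescribing a linear map $\varphi$ on a $\mathbb{Z}$-basis of $\mathbb{Z}^3$ and exploiting the defining relations among the shifts. The two structural facts I would isolate at the outset are: for the Hirota-Miwa vectors of Example~\ref{exa:hm}, the triple $v_1, u_1, v_2$ is a $\mathbb{Z}$-basis of $\mathbb{Z}^3$ (its determinant is $\pm 1$) and satisfies $v_1 + u_1 = w$, $u_2 = w - v_2$; for the discrete BKP vectors of Example~\ref{exa:bkp}, the triple $v_1, v_2, v_3$ is a $\mathbb{Z}$-basis and satisfies the extra relation $v_1 + v_2 + v_3 = w$ together with $u_i = w - v_i$. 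Since the parameters can always be pulled back by setting $\alpha_h^{(i)} = \alpha_{\varphi(h)}'^{(i)}$, the first reduction condition ($\ker\varphi$-invariance) is automatic, and the remaining two conditions together with surjectivity will follow directly from the target being a valid bilinear equation on $L'$.

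For part (1), given a three-term target with $v_1' + u_1' = v_2' + u_2' = w'$, I would define $\varphi\colon \mathbb{Z}^3 \to L'$ on the basis by $\varphi(v_1) = v_1'$, $\varphi(u_1) = u_1'$, $\varphi(v_2) = v_2'$. Then $\varphi(w) = v_1' + u_1' = w'$ and $\varphi(u_2) = \varphi(w - v_2) = w' - v_2' = u_2'$ come out correctly, so the Hirota-Miwa reduction along $\varphi$ reproduces the target. Surjectivity holds because $v_1', u_1', v_2', u_2', w'$ generate $L'$ and all lie in the image, while the independence and distinctness conditions are inherited from the target. No constraint on the coefficients beyond those already built into (\ref{eq:dbil}) is needed, which is exactly why every three-term equation qualifies.

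For part (2), the same construction handles the ``if'' direction: given three distinct shifts $x, y, z$ among $\{v_i', u_i'\}$ with $x + y + z = w'$, I would first argue that they lie in three distinct pairs $\{v_i', u_i'\}$. Indeed, if two of them shared a pair they would sum to $w'$, forcing the third to be $0$, which is impossible since $\mathbb{Z}_{\ge 0}$-independence makes every shift nonzero. After relabelling the three terms and swapping $v_i' \leftrightarrow u_i'$ within pairs (which leaves the symmetric products unchanged), I may assume $v_1' = x$, $v_2' = y$, $v_3' = z$, so that $v_1' + v_2' + v_3' = w'$. Defining $\varphi(v_i) = v_i'$ on the basis then yields $\varphi(w) = w'$ and $\varphi(u_i) = w' - v_i' = u_i'$, exhibiting the target as a discrete BKP reduction. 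For the ``only if'' direction I would run this backward: if the target is a reduction via some $\varphi$, then applying $\varphi$ to the relation $v_1 + v_2 + v_3 = w$ gives $\varphi(v_1) + \varphi(v_2) + \varphi(v_3) = w'$, and these three images are mutually distinct by the third reduction condition and lie among the target shifts, providing the required triple.

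The routine verifications—the determinant computations, the pull-back of parameters, surjectivity, and the inheritance of the independence and distinctness conditions—are mechanical. The one genuinely delicate point, and the step I expect to be the crux, is the combinatorial bookkeeping in part (2): showing that a triple summing to $w'$ necessarily selects exactly one shift from each pair, and that the freedom to permute terms and to swap $v_i' \leftrightarrow u_i'$ lets one normalise to $v_1', v_2', v_3' = x, y, z$ without altering the equation. Getting this relabelling exactly right, so that the abstract notion ``is a reduction'' matches the target equation as written, is where the care is required.
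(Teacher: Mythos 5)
Your proof is correct and takes essentially the same route as the paper: define $\varphi$ on a $\mathbb{Z}$-basis of $\mathbb{Z}^3$ ($v_1, u_1, v_2$ for Hirota-Miwa, $v_1, v_2, v_3$ for BKP) by sending it to the chosen primed shifts, pull back the coefficients via $\alpha^{(i)}_h = \alpha'^{(i)}_{\varphi(h)}$, and for the ``only if'' direction in (2) apply $\varphi$ to the relation $v_1 + v_2 + v_3 = w$. You in fact supply details the paper leaves implicit, most notably the combinatorial observation that three mutually distinct shifts summing to $w'$ must come from three distinct pairs $\{v'_i, u'_i\}$ (otherwise one of them would be $0$), which is what justifies the normalisation $v'_1, v'_2, v'_3 = x, y, z$ and makes the reduced equation literally coincide with the target.
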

\begin{proof}
(1)
Let $L'$ be the lattice on which the reduced equation is defined.
Let $v_1, v_2, u_1, u_2, w \in \mathbb{Z}^3$ be as in Example~\ref{exa:hm}.
Define a $\mathbb{Z}$-linear map $\varphi \colon \mathbb{Z}^3 \to L'$ by $v_1 \mapsto v'_1, u_1 \mapsto u'_1, v_2 \mapsto v'_2$.
Since $v'_1, u'_1, v'_2, u'_2, w'$ satisfy $v'_1 + u'_1 = v'_2 + u'_2 = w'$, we have
\begin{align*}
	\varphi(u_2) &= u'_2, \\
	\varphi(w) &= w'.
\end{align*}
Let us define $\alpha_h (h \in \mathbb{Z}^3)$ by $\alpha_h = \alpha'_{\varphi(h)}$.
Then, $\varphi$ is a reduction from the Hirota-Miwa equation.

(2)
Let $v_1, \ldots, u_3, w \in \mathbb{Z}^3$ be as in Example~\ref{exa:bkp}.

First we suppose that (\ref{eq:bkpreduc}) is a reduction of the discrete BKP equation.
Let $\varphi$ be a $\mathbb{Z}$-linear map that gives the reduction.
Then, $x = \varphi(v_1), y = \varphi(v_2), z = \varphi(v_3)$ are mutually distinct and satisfy $x + y + z = w'$.

Conversely, if $x, y, z \in \{ v'_1, v'_2, v'_3, u'_1, u'_2, u'_3 \}$ satisfy $x + y + z = w'$, then the $\mathbb{Z}$-linear map defined by $v_1 \mapsto x, v_2 \mapsto y, v_3 \mapsto z$ gives a reduction from the discrete BKP equation.
\end{proof}

The most important property of reductions in this paper is the following proposition.

\begin{proposition}\label{prop:reduc}
Let $\varphi \colon L \to L'$ be a reduction, Let $H' \subset L'$ be a good domain and consider the initial value problem
\[
	f'_{h'} = \begin{cases}
	\dfrac{\alpha'^{(1)}_{h'} f'_{h'+v'_1}f'_{h'+u'_1} + \cdots + \alpha'^{(n)}_{h'} f'_{h'+v'_n}f'_{h'+u'_n}}{f'_{h'+w'}}	&	(h' \in H' \setminus H'_0), \\
	X'_{h'}	&	(h' \in H'_0).
	\end{cases}
\]

\begin{itemize}
\item[(1)]
Let $H = \varphi^{-1}(H')$.
Then, $H \subset L$ is a good domain and $H_0 = \varphi^{-1}(H'_0)$.

\item[(2)]
Let $\alpha^{(i)}_h = \alpha'^{(i)}_{\varphi(h)}$ and consider the initial value problem
\[
	f_h = \begin{cases}
	\dfrac{\alpha^{(1)}_h f_{h+v_1}f_{h+u_1} + \cdots + \alpha^{(n)}_h f_{h+v_n}f_{h+u_n}}{f_{h+w}}	&	(h \in H \setminus H_0), \\
	X_h	&	(h \in H_0).
	\end{cases}
\]
Then, $f_h \Big|_{X_{h_0} = X'_{\varphi(h_0)}}$ coincides with $f'_{\varphi(h)}$ for all $h \in H$.
In particular, the Laurent property of discrete bilinear equations cannot be lost by reductions.
\end{itemize}
\end{proposition}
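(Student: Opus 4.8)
The plan is to prove the two parts in order, with part (1) establishing that the pullback domain $H = \varphi^{-1}(H')$ inherits the good-domain structure, and part (2) establishing the coincidence of iterates by induction along the semi-order. For part (1), I would verify the two conditions defining a good domain directly. The second condition (closure under subtracting $\sum a_i v_i + \sum b_i u_i$) transfers immediately: if $h \in H$ then $\varphi(h) \in H'$, and since $\varphi(h) - \sum a_i \varphi(v_i) - \sum b_i \varphi(u_i) = \varphi(h - \sum a_i v_i - \sum b_i u_i) \in H'$ by the good-domain property of $H'$, the preimage point lies in $H$. The first (finiteness) condition is the one requiring care: I must show $H \cap \{h + \sum a_i v_i + \sum b_i u_i\}$ is finite. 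The key obstacle here is that $\varphi$ has a nontrivial kernel, so a priori an infinite fiber could collapse to a finite set upstairs; the resolution is the $\mathbb{Z}_{\ge 0}$-linear independence of the $\varphi(v_i), \varphi(u_i)$ guaranteed by the reduction conditions, which forces the map $(a_i, b_i) \mapsto \sum a_i \varphi(v_i) + \sum b_i \varphi(u_i)$ to be injective on $\mathbb{Z}_{\ge 0}^{2n}$, so finiteness downstairs pulls back to finiteness of the index set upstairs. The identity $H_0 = \varphi^{-1}(H'_0)$ then follows from the characterization $H_0 = \{h \in H \mid h + w \notin H\}$ together with $\varphi(h+w) = \varphi(h) + w'$.

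For part (2), I would proceed by induction on $h \in H$ with respect to the semi-order $\le$ of Definition~\ref{defi:order}, which is legitimate precisely because part (1) makes $H$ a good domain. Write $\tilde f_h = f_h\big|_{X_{h_0} = X'_{\varphi(h_0)}}$. The base case is $h \in H_0 = \varphi^{-1}(H'_0)$: here $\tilde f_h = X_h|_{\cdots} = X'_{\varphi(h)} = f'_{\varphi(h)}$ by construction of the initial data. For the inductive step, take $h \in H \setminus H_0$. Each shift $h + v_i, h + u_i, h + w$ is strictly below $h$ in the semi-order, so by the induction hypothesis $\tilde f_{h+v_i} = f'_{\varphi(h)+v'_i}$, and similarly for the $u_i$ and $w$ shifts, using $\varphi(h+v_i) = \varphi(h)+v'_i$. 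Since the coefficients are matched by $\alpha^{(i)}_h = \alpha'^{(i)}_{\varphi(h)}$, substituting these equalities into the evolution rule for $f_h$ turns it term by term into the evolution rule for $f'_{\varphi(h)}$, giving $\tilde f_h = f'_{\varphi(h)}$.

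The genuinely delicate point deserving attention is an implicit well-definedness issue in the induction: distinct points $h_1 \ne h_2$ of $H$ with $\varphi(h_1) = \varphi(h_2)$ must yield $\tilde f_{h_1} = \tilde f_{h_2}$, since both equal $f'_{\varphi(h_1)}$. This is not automatic from the recurrence alone but is forced by the first reduction condition, namely that the coefficients $\alpha^{(i)}_h$ are $(\ker\varphi)$-invariant; without it the two fiber points could evolve differently. I would note that the induction as structured handles this automatically, because the claim $\tilde f_h = f'_{\varphi(h)}$ is proved for \emph{every} $h$ in the fiber from the same downstairs data, so consistency is a consequence rather than an extra hypothesis to check. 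Finally, the concluding sentence about the Laurent property is immediate: if the downstairs problem has the Laurent property, every $f'_{h'}$ is a Laurent polynomial in the $X'_{h'_0}$; by part (2) each $f_h$, after the substitution $X_{h_0} = X'_{\varphi(h_0)}$, equals such a Laurent polynomial, and since the substitution is an injection on initial variables (one may even take it to be the identity by choosing $X'_{\varphi(h_0)} = X_{h_0}$), the upstairs iterates are themselves Laurent polynomials in the upstairs initial values.
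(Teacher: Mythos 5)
Your part (2) and the easy half of part (1) are correct and essentially match the paper, which disposes of (2) with the same induction along the semi-order of Definition~\ref{defi:order} and notes, as you do, that the second good-domain condition and the identity $H_0 = \varphi^{-1}(H'_0)$ transfer directly. The problem is in the step you yourself single out as the crux: the finiteness condition. Your claim that $\mathbb{Z}_{\ge 0}$-linear independence of the $\varphi(v_i), \varphi(u_i)$ forces the map $(a_i,b_i) \mapsto \sum a_i \varphi(v_i) + \sum b_i \varphi(u_i)$ to be injective on $\mathbb{Z}_{\ge 0}^{2n}$ is false. $\mathbb{Z}_{\ge 0}$-linear independence only says that $0$ admits no nontrivial nonnegative representation; it does not give uniqueness of representations of nonzero points. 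Indeed injectivity \emph{always} fails for a bilinear equation with $n \ge 2$ terms, because $\varphi(v_1) + \varphi(u_1) = \varphi(v_2) + \varphi(u_2) = w'$: the tuple with $a_1 = b_1 = 1$ (all else zero) and the tuple with $a_2 = b_2 = 1$ (all else zero) are distinct but have the same image. So, as written, your inference ``finiteness downstairs pulls back to finiteness of the index set upstairs'' is unsupported, and this is precisely the point where the nontrivial kernel of $\varphi$ has to be confronted.

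The gap is repairable, and the repair is where the real content lies. What you actually need is not injectivity but finiteness of the fibers of the tuple map, and this \emph{does} follow from $\mathbb{Z}_{\ge 0}$-linear independence: if some fiber $\{(a,b) \in \mathbb{Z}_{\ge 0}^{2n} \mid \sum a_i \varphi(v_i) + \sum b_i \varphi(u_i) = z'\}$ were infinite, Dickson's lemma would produce two distinct componentwise-comparable tuples in it, and their difference would be a nontrivial $\mathbb{Z}_{\ge 0}$-combination of the $\varphi(v_i), \varphi(u_i)$ equal to $0$, a contradiction. With finite fibers, your counting argument goes through: the relevant tuples map into the finite set $H' \cap (\varphi(h) + \operatorname{span}_{\mathbb{Z}_{\ge 0}}(v'_i,u'_i))$, each downstairs point is hit by finitely many tuples, and each tuple yields one upstairs point. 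For comparison, the paper attacks the fibers of $\varphi$ itself rather than of the tuple map: it first reduces to corank-one reductions $L' = L/\mathbb{Z}x$, then shows each set $\varphi^{-1}(z') \cap \operatorname{span}_{\mathbb{Z}_{\ge 0}}(v_i,u_i)$ is finite by a convexity-and-closedness argument on the real cone, deducing otherwise that $\varphi(x) = 0$ would be a nontrivial $\mathbb{R}_{\ge 0}$-combination of the $\varphi(v_i), \varphi(u_i)$, contradicting the second reduction condition. Either route works; yours, once corrected, has the mild advantage of handling arbitrary corank without the factorization into corank-one reductions.
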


\begin{proof}
We may assume without loss of generality that $L' = L / \mathbb{Z}x$ for some $x \in L \setminus \{ 0 \}$ since every nontrivial reduction can be written as a composition of reductions of corank one.

(1)
By construction, $H$ satisfies the second condition on a good domain.
Let $h \in H$ and
\begin{align*}
	S &= \operatorname{span}_{\mathbb{Z}_{\ge 0}}(v_i, u_i) \subset L, \\
	S' &= \operatorname{span}_{\mathbb{Z}_{\ge 0}}(v'_i, u'_i) \subset L'.
\end{align*}
Then, it is sufficient to show that $H \cap (h + S)$ is a finite set.
By a translation on $L$, we have
\[
	\# H \cap (h - S) = \#(H - h) \cap S,
\]
where $\#$ represents the cardinality of the set.
Since $H - h = \varphi^{-1}(H' - \varphi(h))$ and $\varphi(S) = S'$, we have
\[
	(H - h) \cap S = \varphi^{-1}((H' - \varphi(h)) \cap S') \cap S.
\]
Since $H' \subset L'$ is a good domain, $(H' - \varphi(h)) \cap S'$ is a finite set.
Thus, it is sufficient to show that for every $z' \in S'$, the set $\varphi^{-1}(z') \cap S$ is finite.

Let $z' \in S'$ and $z \in L$ satisfy $\varphi(z) = z'$.
Assume $\varphi^{-1}(z') \cap S$ to be infinite.
Since $\varphi^{-1}(z') = z + \mathbb{Z}x$, there exist distinct integers $m_0, m_1, \cdots$ satisfying $z + m_j x \in S$.
As the sign of $x$ can be changed at will, we may assume without loss of generality that $\{ m_j \}^{\infty}_{j = 0}$ contains infinitely many positive integers.
Let $L_{\mathbb{R}} = L \otimes \mathbb{R}$ and $S_{\mathbb{R}} = \operatorname{span}_{\mathbb{R}_{\ge 0}}(v_i, u_i)$.
Then, $S_{\mathbb{R}}$ is a closed convex cone in $L_{\mathbb{R}}$.
Since $0, z, z + m_j \in S_{\mathbb{R}}$ and $S_{\mathbb{R}}$ is convex, we have
\[
	\{ sz + tx \in L_{\mathbb{R}} \, | \, 0 < s < 1, t \ge 0 \} \subset S_{\mathbb{R}}.
\]
Since $S_{\mathbb{R}}$ is closed, we have
\[
	\{ sz + tx \in L_{\mathbb{R}} \, | \, 0 \le s \le 1, t \ge 0 \} \subset S_{\mathbb{R}}
\]
and thus $x$ can be expressed as a nontrivial $\mathbb{R}_{\ge 0}$-coefficient linear combination of $v_i, u_i$.
Sending the expression by $\varphi$, we can write $\varphi(x) = 0 \in L'$ as a nontrivial $\mathbb{R}_{\ge 0}$-coefficient linear combination of $v'_i, u'_i$.
However, this contradicts the second condition of reductions.
Hence, $H \subset L$ must be a good domain.

(2) is easily shown by induction.
\end{proof}


\section{Gauge transformations}\label{sec:gauge}

As mentioned in \textsection\ref{sec:intro}, a gauge transformation is an operation in which the dependent variable is multiplied by a non-vanishing global function.

Let $f_h$ be a solution of (\ref{eq:dbil}), $\phi_h \in k^{\times}$ a function on $L$ and $\tilde{f}_h = f_h \phi_h$.
Then, $\tilde{f}_h$ satisfies the following nonautonomous equation with different parameters:
\begin{align*}
	\tilde{f}_h &= \frac{\tilde{\alpha}^{(1)}_h \tilde{f}_{h+v_1}\tilde{f}_{h+u_1} + \cdots + \tilde{\alpha}^{(n)}_h \tilde{f}_{h+v_n}\tilde{f}_{h+u_n}}{\tilde{f}_{h+w}}, \\
	\tilde{\alpha}^{(i)}_h &= \frac{\phi_h \phi_{h+w}}{\phi_{h+v_i} \phi_{h+u_i}}\alpha^{(i)}_h.
\end{align*}

It is usual to think of a gauge transformation as an operation between solutions.
However, it is also possible to interpret it as an operation that transforms parameters in a given equation.
In this paper, we adopt the latter view.

\begin{definition}
Let $\phi_h \in k^{\times}$ be a function on $L$.
A gauge transformation of (\ref{eq:dbil}) by $\phi_h$ is an operation that transforms the parameters $\alpha^{(i)}_h$ as follows:
\[
	\alpha^{(i)}_h \mapsto \frac{\phi_h \phi_{h+w}}{\phi_{h+v_i} \phi_{h+u_i}}\alpha^{(i)}_h.
\]
\end{definition}

It is clear that any gauge transformation is invertible, since the transformation by $1 / \phi_h$ gives the inverse.

The following proposition is the most important property concerning gauge transformations in this paper.

\begin{proposition}\label{prop:gauge}
For discrete bilinear equations, gauge transformations preserve the Laurent property.
\end{proposition}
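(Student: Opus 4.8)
The plan is to fix an arbitrary good domain $H \subset L$ and to compare, on this same $H$, the initial value problem for the original equation (\ref{eq:dbil}) with the one for its gauge transform by $\phi_h$. Since a gauge transformation alters only the coefficients $\alpha^{(i)}_h$ and leaves the shift vectors $v_i, u_i, w$ untouched, the notion of good domain and the associated initial domain $H_0$ are identical for the two equations, so it suffices to prove the Laurent property for the transformed problem assuming it for the original one. Let $X_{h_0}$ $(h_0 \in H_0)$ be the initial values of the original problem, $A = k[X_{h_0}, X_{h_0}^{-1} \mid h_0 \in H_0]$, and let $\tilde{X}_{h_0}$ be the initial values of the transformed problem. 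Because each $\phi_{h_0}$ lies in $k^{\times}$, the assignment $\tilde{X}_{h_0} \mapsto \phi_{h_0} X_{h_0}$ extends to a $k$-algebra isomorphism $k[\tilde{X}_{h_0}, \tilde{X}_{h_0}^{-1} \mid h_0 \in H_0] \cong A$, under which Laurent polynomials correspond to Laurent polynomials; I identify the two rings through it.

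The heart of the argument is to show, by induction on $h \in H$ with respect to the semi-order of Definition~\ref{defi:order}, that the iterates, \emph{a priori} elements of $\operatorname{Frac}(A)$, satisfy $\tilde{f}_h = \phi_h f_h$. On the initial domain this is just $\tilde{f}_{h_0} = \tilde{X}_{h_0} = \phi_{h_0} X_{h_0} = \phi_{h_0} f_{h_0}$ under the identification above. For $h \in H \setminus H_0$ I would substitute the inductive hypotheses $\tilde{f}_{h+v_i} = \phi_{h+v_i} f_{h+v_i}$, $\tilde{f}_{h+u_i} = \phi_{h+u_i} f_{h+u_i}$, $\tilde{f}_{h+w} = \phi_{h+w} f_{h+w}$ together with the transformed coefficients $\tilde{\alpha}^{(i)}_h = \frac{\phi_h \phi_{h+w}}{\phi_{h+v_i}\phi_{h+u_i}}\alpha^{(i)}_h$ into the evolution for $\tilde{f}_h$; the factors $\phi_{h+v_i}\phi_{h+u_i}$ cancel term by term and a common $\phi_{h+w}$ cancels against the denominator, leaving exactly $\phi_h f_h$. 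This is precisely the computation already recorded before the definition of a gauge transformation, now carried out inside the initial value problem rather than for a single fixed solution.

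With the relation $\tilde{f}_h = \phi_h f_h$ in hand, the conclusion is immediate: if the original equation has the Laurent property, then each $f_h$ lies in $A$, and since $\phi_h \in k^{\times}$ is a unit of $A$, so does $\tilde{f}_h = \phi_h f_h$; translating back through the isomorphism, $\tilde{f}_h$ is a Laurent polynomial in the $\tilde{X}_{h_0}$. Hence the transformed equation has the Laurent property as well. As gauge transformations are invertible, the transformation by $1/\phi_h$ yields the reverse implication, so the property is genuinely preserved in both directions. I expect no serious obstacle here: the only point requiring care is the bookkeeping that distinguishes the two Laurent polynomial rings and relates them via the unit rescaling $\tilde{X}_{h_0} = \phi_{h_0} X_{h_0}$, after which the verification of $\tilde{f}_h = \phi_h f_h$ and the invariance of the Laurent property under multiplication by units of $k$ are entirely routine.
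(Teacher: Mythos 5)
Your proposal is correct and follows essentially the same route as the paper: the paper's proof also rests on the identity $f_h \big|_{X_{h_0} = \tilde{X}_{h_0}/\phi_{h_0}} = \tilde{f}_h / \phi_h$, i.e.\ your relation $\tilde{f}_h = \phi_h f_h$ under the unit rescaling of initial values, and then concludes since multiplication by $\phi_h \in k^{\times}$ preserves Laurent polynomials. The only difference is presentational: you spell out the inductive verification and the ring isomorphism that the paper dismisses as ``clear.''
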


\begin{proof}
Let $\phi_h \in k^{\times}$ be a gauge function and $H \subset L$ a good domain.
Suppose that the corresponding initial value problem
\[
	f_h = \begin{cases}
		\dfrac{\alpha^{(1)}_h f_{h+u_1}f_{h+u_2} + \cdots + \alpha^{(n)}_h f_{h+u_n}f_{h+u_n}}{f_{h+w}} & (h \in H \setminus H_0) \\
		X_h & (h \in H_0)
	\end{cases}
\]
has the Laurent property.
The equation obtained after a gauge transformation by $\phi_h$ is
\[
	\tilde{f}_h = \begin{cases}
		\dfrac{\tilde{\alpha}^{(1)}_h \tilde{f}_{h+u_1}\tilde{f}_{h+u_2} + \cdots + \tilde{\alpha}^{(n)}_h \tilde{f}_{h+u_n}\tilde{f}_{h+u_n}}{\tilde{f}_{h+w}} & (h \in H \setminus H_0) \\
		\tilde{X}_h & (h \in H_0),
	\end{cases}
\]
where $\tilde{\alpha}^{(i)}_h = \alpha^{(i)}_h \phi_h \phi_{h+w} / \phi_{h+v_i} \phi_{h+u_i}$.
It is clear that
\[
	f_h \Big|_{X_{h_0} = \tilde{X}_{h_0} / \phi_{h_0}} = \frac{\tilde{f}_h}{\phi_h}
\]
and thus that $\tilde{f}_h$ is a Laurent polynomial of $\tilde{X}_{h_0}$.
\end{proof}

The above correspondence is an isomorphism between two Laurent polynomial rings.
Hence, many properties are invariant under gauge transformations.
For example, the degree and the irreducibility of iterates are also preserved.

\begin{example}
The proof of Proposition~\ref{prop:gauge} is in fact valid for any type of equation.
For example, consider the equation in Example~\ref{exa:ex3}.
Let $\phi_m \in k^{\times}$ be a function on $\mathbb{Z}$.
Then, $\tilde{f}_m = \phi_m f_m$ satisfies
\begin{align*}
	\tilde{f}_{m} &= \dfrac{\tilde{\alpha}_m \tilde{f}^r_{m-1} + \tilde{\beta}_m}{\tilde{f}_{m-2}}, \\
	\tilde{\alpha}_m &= \frac{\phi_m \phi_{m-2}}{\phi^r_{m-1}} \alpha_m, \\
	\tilde{\beta}_m &= \phi_m \phi_{m-2} \beta_m.
\end{align*}
In fact, a direct calculation shows that the two conditions for the Laurent property, $\alpha_m \alpha_{m-2}\beta^r_{m-1} = \beta_m \beta_{m-2}$ and $\tilde{\alpha}_m \tilde{\alpha}_{m-2}\tilde{\beta}^r_{m-1} = \tilde{\beta}_m \tilde{\beta}_{m-2}$, are equivalent.
\end{example}


\section{The Laurent property for nonautonomous bilinear equations}\label{sec:main}

In this section, we study the conditions nonautonomous equations have to satisfy to possess the Laurent property.

\begin{theorem}\label{thm:mainhm}
The following three conditions for the nonautonomous Hirota-Miwa equation are equivalent:
\begin{itemize}
\item[(1)]
The equation has the Laurent property.
\item[(2)]
$\alpha_h$ and $\beta_h$ satisfy
\begin{equation}\label{eq:aabbhm}
	\alpha_{h}\alpha_{h+w}\beta_{h+v_1}\beta_{h+u_1} = \beta_{h}\beta_{h+w}\alpha_{h+v_2}\alpha_{h+u_2}
\end{equation}
for all $h \in \mathbb{Z}^3$.
\item[(3)]
The equation can be transformed into an autonomous system by a gauge transformation.
Moreover, any nonzero value is permitted as a parameter value in the autonomous system.
\end{itemize}
\end{theorem}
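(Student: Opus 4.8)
The plan is to prove the cycle $(3)\Rightarrow(1)\Rightarrow(2)\Rightarrow(3)$, supplemented by the remark that (\ref{eq:aabbhm}) is itself invariant under gauge transformations (which also yields $(3)\Rightarrow(2)$ directly and explains why (\ref{eq:aabbhm}) is the natural object). The implication $(3)\Rightarrow(1)$ is immediate: an autonomous Hirota--Miwa equation has the Laurent property by Theorem~\ref{thm:fz}, and by Proposition~\ref{prop:gauge} a gauge transformation cannot destroy it, so any equation gauge-equivalent to an autonomous one has the Laurent property.

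For $(1)\Leftrightarrow(2)$ I would isolate the obstruction to a single evolution step, reusing the localized ring $A'=A[f^{-1}_{h+w+v_1},f^{-1}_{h+w+u_1},f^{-1}_{h+w+v_2},f^{-1}_{h+w+u_2}]/(f_{h+w})$ and the irreducibility and pairwise coprimeness of iterates established in the elementary proof of Theorem~\ref{thm:fz} (that part of the argument used only the good-domain structure, never autonomy, so it is available as soon as the relevant iterates are known to be Laurent). Setting $F=\alpha_h f_{h+v_1}f_{h+u_1}+\beta_h f_{h+v_2}f_{h+u_2}$ and evaluating the equation at the shifted points $h+v_1,h+u_1,h+v_2,h+u_2$ --- where, because $v_1+u_1=v_2+u_2=w$, one summand carries a factor $f_{h+w}$ and drops out modulo $f_{h+w}$ --- then substituting and using the equation at $h+w$, the expression $F$ collapses in $A'$ to a nonzero element times
\[
\alpha_h\alpha_{h+w}\beta_{h+v_1}\beta_{h+u_1}-\beta_h\beta_{h+w}\alpha_{h+v_2}\alpha_{h+u_2}.
\]
Thus $f_{h+w}\mid F$ in $A$ exactly when this quantity vanishes, i.e.\ exactly when (\ref{eq:aabbhm}) holds at $h$. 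Under $(2)$ the obstruction vanishes at every step and the induction of Theorem~\ref{thm:fz} goes through unchanged, giving $(1)$; conversely, under $(1)$ each $f_h$ lies in $A$ for every good domain, so for any putative failure point $h_*$ of (\ref{eq:aabbhm}) I would choose a good domain in whose interior $h_*$ together with all iterates entering the computation lie (possible since good domains may be taken arbitrarily large), and the non-divisibility $f_{h_*+w}\nmid F$ would contradict $f_{h_*}\in A$; hence $(2)$.

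It remains to deduce $(3)$ from $(2)$. First I would record the gauge-invariance of (\ref{eq:aabbhm}): under $\alpha_h\mapsto \phi_h\phi_{h+w}\phi_{h+v_1}^{-1}\phi_{h+u_1}^{-1}\alpha_h$ and $\beta_h\mapsto\phi_h\phi_{h+w}\phi_{h+v_2}^{-1}\phi_{h+u_2}^{-1}\beta_h$ the two sides of (\ref{eq:aabbhm}) are multiplied by gauge factors whose ratio cancels to $1$, so the relation is preserved; since constants $\alpha_h\equiv a,\beta_h\equiv b$ satisfy (\ref{eq:aabbhm}) trivially, (\ref{eq:aabbhm}) is precisely the condition cutting out the image of the gauge action. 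To realize an autonomous target I would solve, for $\phi\colon\mathbb{Z}^3\to k^{\times}$, the system
\[
\frac{\phi_{h+v_1}\phi_{h+u_1}}{\phi_h\phi_{h+w}}=\frac{\alpha_h}{a},\qquad \frac{\phi_{h+v_2}\phi_{h+u_2}}{\phi_h\phi_{h+w}}=\frac{\beta_h}{b},
\]
a pair of multiplicative linear difference equations for a single potential $\phi$ (a discrete ``closed-implies-exact'' problem in the abelian group $k^{\times}$). Using the basis $v_1,u_1,v_2$ of $\mathbb{Z}^3$ the first equation becomes a planar mixed difference solvable slice by slice, and the residual freedom, exactly the functions separable in the $v_1,u_1$ directions, is what is used to satisfy the second; (\ref{eq:aabbhm}) is the compatibility (path-independence) condition that makes the construction consistent. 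That any nonzero $a,b$ occur is built into the construction, and one passes between autonomous systems with different constants via quadratic-exponential gauges $\phi_h=c^{Q(h)}$ with $Q$ a quadratic form, which produce constant gauge factors.

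The hard part is the construction in $(2)\Rightarrow(3)$: it is the genuine cohomological content, requiring one to integrate the two multiplicative difference equations into a single potential on $\mathbb{Z}^3$ with (\ref{eq:aabbhm}) as the exact compatibility condition and with enough leftover gauge freedom to normalize the second coefficient to a constant. By contrast $(1)\Leftrightarrow(2)$ is mechanical once the irreducibility bookkeeping of Theorem~\ref{thm:fz} is imported, and $(3)\Rightarrow(1)$ is immediate.
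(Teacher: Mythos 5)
Your skeleton coincides with the paper's: the cycle is the same (the paper proves $(1)\Rightarrow(2)\Rightarrow(3)\Rightarrow(1)$), your $(3)\Rightarrow(1)$ is verbatim Theorem~\ref{thm:fz} plus Proposition~\ref{prop:gauge}, and your $(1)\Rightarrow(2)$ is the same obstruction computation the paper performs. One organizational difference there: the paper chooses the good domain $H=\{h+2w-a_1v_1-a_2v_2-b_1u_1-b_2u_2 \mid a_i,b_i\in\mathbb{Z}_{\ge0}\}$, for which every lattice point entering the substitution other than $h$, $h+v_i$, $h+u_i$, $h+w$, $h+2w$ (namely the points $h+w+v_i$, $h+w+u_i$ and $h+v_i+u_j$, $h+v_i+v_j$, $h+u_i+u_j$) lies in $H_0$; those factors are then initial variables, hence units of $A$, so the prefactor multiplying the coefficient expression is manifestly a unit modulo $f_{h+w}$ and no irreducibility or coprimality input is needed. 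Your version instead inverts general iterates in $A'$ and imports the irreducibility statements from the elementary proof of Theorem~\ref{thm:fz}; that does work, but only after re-running that induction in the nonautonomous setting with Laurentness as a hypothesis, a step you assert rather than perform. Your observation that (\ref{eq:aabbhm}) is gauge-invariant is correct and a nice supplement (it is implicit, not stated, in the paper).

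The genuine gap is in $(2)\Rightarrow(3)$, which you yourself flag as ``the hard part'' and then do not carry out. Identifying the problem as a discrete multiplicative integration problem with (\ref{eq:aabbhm}) as the putative compatibility condition is exactly right, but the entire content of this implication is the verification that (\ref{eq:aabbhm}) really does make the construction consistent, and that verification is absent: ``the residual freedom \ldots is what is used to satisfy the second; (\ref{eq:aabbhm}) is the compatibility (path-independence) condition'' is the statement to be proved, not a proof of it. The paper's proof of $(2)\Rightarrow(3)$ consists precisely of this verification, organized in three explicit stages: first a potential $P_h$ solving $P_{h+v_1}=(\alpha_h/\alpha_{h+u_2})P_h$ and $P_{h+v_2}=(\beta_h/\beta_{h+u_1})P_h$, whose compatibility condition is exactly (\ref{eq:aabbhm}); then potentials $A_h$, $B_h$ built from $P_h$ and the target constants $\tilde\alpha$, $\tilde\beta$, whose compatibility follows from the construction of $P_h$; then $\phi_h$ from $A_h$, $B_h$, whose compatibility follows in turn, and which is checked to realize the desired gauge factors. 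Until you supply the analogous check for your slice-by-slice scheme (concretely: that propagating a solution of the first equation across slices via the second equation preserves the first equation, using (\ref{eq:aabbhm})), your proposal establishes only $(1)\Leftrightarrow(2)$ and $(3)\Rightarrow(1)$. A minor further caveat: passing between autonomous systems by gauges $\phi_h=c^{Q(h)}$ requires solving $c^{2B(v_1,u_1)}=\tilde\alpha/a$ in $k^{\times}$, i.e.\ extracting roots, which can fail over an arbitrary base field; the correct route, which the paper takes and which your construction also permits, is to aim at arbitrary $\tilde\alpha,\tilde\beta$ from the start.
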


\begin{theorem}\label{thm:mainbkp}
The following three conditions for the nonautonomous discrete BKP equation are equivalent:
\begin{itemize}
\item[(1)]
The equation has the Laurent property.
\item[(2)]
$\alpha_h, \beta_h, \gamma_h$ satisfy the following three relations for all $h \in \mathbb{Z}^3$:
\begin{align*}
	\alpha_{h+v_2}\beta_{h}\gamma_{h+u_1} &= \alpha_{h+v_3}\beta_{h+u_1}\gamma_{h}, \\
	\alpha_{h+u_2}\beta_{h+v_3}\gamma_{h} &= \alpha_{h}\beta_{h+v_1}\gamma_{h+u_2}, \\
	\alpha_{h}\beta_{h+u_3}\gamma_{h+v_1} &= \alpha_{h+u_3}\beta_{h}\gamma_{h+v_2}.
\end{align*}
\item[(3)]
The equation can be transformed into an autonomous system by a gauge transformation.
Moreover, any nonzero value is permitted as a parameter value in the autonomous system.
\end{itemize}
\end{theorem}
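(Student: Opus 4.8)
The plan is to prove the cycle (2) $\Rightarrow$ (3) $\Rightarrow$ (1) $\Rightarrow$ (2), following the same scheme one would use for Theorem~\ref{thm:mainhm} but now coping with the three-term, rather than two-term, structure of the discrete BKP equation. The implication (3) $\Rightarrow$ (1) is immediate: by Proposition~\ref{prop:gauge} a gauge transformation preserves the Laurent property, and the autonomous discrete BKP equation has the Laurent property by Theorem~\ref{thm:fz}; hence any equation that is gauge-equivalent to an autonomous one inherits the property.

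For (2) $\Rightarrow$ (3) I would construct the gauge function $\phi_h$ directly. Fixing arbitrary nonzero constants $a,b,c$ as the desired autonomous coefficients, the requirement that the gauge transformation by $\phi_h$ send $(\alpha_h,\beta_h,\gamma_h)$ to $(a,b,c)$ is, by the formula in the definition of a gauge transformation, the system of three multiplicative difference equations $\phi_{h+v_i}\phi_{h+u_i}/(\phi_h\phi_{h+w}) = \alpha^{(i)}_h/a_i$ for $\phi$. Forming the pairwise ratios of these equations eliminates the common factor $\phi_h\phi_{h+w}$ and reduces the problem to prescribing certain discrete ``potential differences''; the integrability (closure) conditions for this reduced system, which do not involve the constants $a,b,c$, are exactly the three relations of (2). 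Because the relations are independent of $a,b,c$, once they hold one may solve for $\phi_h$ for any prescribed nonzero triple, which simultaneously yields the gauge-equivalence to an autonomous system and the final clause of (3). Global existence of $\phi_h$ on $L=\mathbb{Z}^3$ is obtained by fixing $\phi$ on a boundary sublattice and propagating through the difference equations, the relations guaranteeing that the result is path-independent and hence well defined.

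The heart of the matter, and the step I expect to be the main obstacle, is (1) $\Rightarrow$ (2). Here I would reprise the elementary proof of Theorem~\ref{thm:fz} for the discrete BKP equation, this time retaining the nonautonomous coefficients at each application of the recurrence. As in Example~\ref{exa:ex3}, the Laurent property forces $f_{h+w}\mid F$, that is $F\equiv 0$ in the quotient ring $A'$; carrying out the same chain of substitutions (each vanishing term again arising from $f_{h+w}\equiv 0$) now yields a congruence $F\equiv P\cdot E\pmod{f_{h+w}}$, in which $P$ is a product of iterates, hence a non-zero-divisor in the integral domain $A'$, and $E$ is a linear combination of the three products $f_{h+w+v_i}f_{h+w+u_i}$ whose coefficients are monomials in $\alpha,\beta,\gamma$ evaluated at shifted lattice points. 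Comparing $E$ with the recurrence at $h+w$, which reads $\sum_i \alpha^{(i)}_{h+w}f_{h+w+v_i}f_{h+w+u_i}\equiv 0$ in $A'$, the vanishing of $F$ is equivalent to the proportionality of the coefficient vector of $E$ with $(\alpha_{h+w},\beta_{h+w},\gamma_{h+w})$, and unravelling this proportionality produces precisely the three relations of (2). Two points require care: first, to make the quotient ring $A'$ and the divisibility criterion legitimate one must know, under the hypothesis of the Laurent property, that the iterates are irreducible and pairwise coprime---this follows by re-running, essentially verbatim and independently of autonomy, the induction that proves statement (2) in the elementary proof; and second, one must track the coefficients through the lengthy three-term computation with enough care to confirm that exactly three independent relations survive, the genuine difficulty being combinatorial bookkeeping rather than any new idea.
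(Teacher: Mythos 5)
Your overall skeleton matches the paper: the cycle (1) $\Rightarrow$ (2) $\Rightarrow$ (3) $\Rightarrow$ (1), with (3) $\Rightarrow$ (1) immediate from Theorem~\ref{thm:fz} and Proposition~\ref{prop:gauge}, and (2) $\Rightarrow$ (3) proved by directly constructing the gauge function as a solution of a multiplicative difference system whose compatibility (path-independence) conditions are the relations in (2); the paper realizes your sketch as a three-stage construction (potentials $P_h, Q_h, R_h$, then $A_h, B_h, C_h$, then $\phi_h$). The genuine gap is in (1) $\Rightarrow$ (2). You assert that rerunning the autonomous substitution chain modulo $f_{h+w}$ alone yields $F \equiv P \cdot E \pmod{f_{h+w}}$ with $E$ a linear combination of the three products $f_{h+w+v_i}f_{h+w+u_i}$. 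That factorization is precisely what fails in the nonautonomous setting: in the autonomous computation the roughly two dozen terms produced by the substitutions regroup into (prefactor) $\times$ (BKP combination at $h+w$) only because all coefficients are equal (and even there the prefactor is a sum of two products of cross-iterates, not a product of iterates); with position-dependent coefficients the sum does not factor, and the existence of such a factorization is essentially equivalent to the relations (2) you are trying to establish, so the "proportionality" argument begs the question. What you actually face is a large sum of terms, each a coefficient monomial times a product of cross-iterates $f_{h+v_i+u_j}$ and of $f_{h+w+v_i}, f_{h+w+u_i}$; to extract scalar conditions from its vanishing you must prove linear independence over $k$ of these monomial combinations in the localized quotient ring modulo the single relation coming from the BKP equation at $h+w$, which in turn requires coprimality statements for many second-generation iterates. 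None of this is supplied, and it is far more than "combinatorial bookkeeping."

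The paper's proof avoids all of this with one device you are missing: it works modulo the ideal $(f_{h+w}, f_{h+u_3})$ generated by \emph{two} iterates, over a specially chosen good domain $H = \{ h - r_1 v_1 - r_2 v_2 - r_3 v_3 \mid r_i \in \mathbb{Z}_{\ge -2} \}$ for which the relevant neighbouring iterates are initial values, hence units. Since $v_1 + v_2 = u_3$ and $v_i + u_i = w$, setting $f_{h+u_3} = 0$ in addition to $f_{h+w} = 0$ kills all but one term in each substitution, e.g. $f_{h+v_1} = \gamma_{h+v_1} f_{h+u_2} X_{h+v_1+u_3} / X_{h+w+v_1}$, and $F$ collapses in two lines to an explicit unit multiple of $\beta_h \gamma_{h+v_2} - \alpha_h \beta_{h+u_3} \gamma_{h+v_1} / \alpha_{h+u_3}$; the Laurent property forces $F \equiv 0$ (because $F = f_h f_{h+w}$ lies in the ideal), hence this scalar vanishes, giving one relation, and the other two follow by cyclic permutation of the indices. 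This also dispenses with most of the irreducibility/coprimality machinery you planned to re-derive. To repair your proposal, either carry out the independence analysis sketched above in full, or adopt the two-iterate quotient, which reduces the whole implication to a short computation.
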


\begin{proof}[Proof of Theorem~\ref{thm:mainhm}]

(1) $\Rightarrow$ (2).
Let $h \in \mathbb{Z}^3$ and
\[
	H = \{ h + 2w - a_1 v_1 - a_2 v_2 - b_1 u_1 - b_2 u_2 \, | \, a_i, b_i \in \mathbb{Z}_{\ge 0} \}.
\]
Since $H \subset \mathbb{Z}^3$ is a good domain, the corresponding initial value problem
\[
	f_{h'} = \begin{cases}
		\dfrac{\alpha_{h'}f_{h'+v_1}f_{h'+u_1} + \beta_{h'}f_{h'+v_2}f_{h'+u_2}}{f_{h'+w}} & (h' \in H \setminus H_0) \\
		X_{h'} & (h' \in H_0)
	\end{cases}
\]
has the Laurent property.

Let
\[
	F = \alpha_{h}f_{h+v_1}f_{h+u_1} + \beta_{h}f_{h+v_2}f_{h+u_2}
\]
and calculate $F$ modulo $f_{h+w}$.
Since $f_{h+w} \equiv 0$, we have
\begin{align*}
	f_{h+v_1} & \equiv \frac{\beta_{h+v_1}f_{h+v_1+v_2}f_{h+v_1+u_2}}{X_{h+w+v_1}}, &
	f_{h+u_1} & \equiv \frac{\beta_{h+u_1}f_{h+u_1+v_2}f_{h+u_1+u_2}}{X_{h+w+u_1}}, \\
	f_{h+v_2} & \equiv \frac{\alpha_{h+v_2}f_{h+v_1+v_2}f_{h+u_1+v_2}}{X_{h+2+v_2}}, &
	f_{h+u_2} & \equiv \frac{\alpha_{h+u_2}f_{h+v_1+u_2}f_{h+u_1+u_2}}{X_{h+w+u_2}},
\end{align*}
and
\begin{multline*}
	F \equiv \frac{f_{h+v_1+v_2}f_{h+v_1+u_2}f_{h+u_1+v_2}f_{h+u_1+u_2}}{X_{h+w+v_1}X_{h+w+u_1}X_{h+w+v_2}X_{h+w+u_2}} \times \\
	\left( \alpha_h \beta_{h+v_1} \beta_{h+u_1}X_{h+w+v_2}X_{h+w+u_2} + \beta_h \alpha_{h+v_2} \alpha_{h+u_2} X_{h+w+v_1}X_{h+w+u_1} \right).
\end{multline*}
By the Hirota-Miwa equation,
\[
	X_{h+w+v_1}X_{h+w+u_1} \equiv - \frac{\beta_{h+w}}{\alpha_{h+w}} X_{h+w+v_2}X_{h+w+u_2},
\]
and thus
\[
	F \equiv \frac{f_{h+v_1+v_2}f_{h+v_1+u_2}f_{h+u_1+v_2}f_{h+u_1+u_2}}{X_{h+w+v_1}X_{h+w+u_1}X_{h+w+v_2}X_{h+w+u_2}} \left( \alpha_h \beta_{h+v_1} \beta_{h+u_1} - \frac{\beta_h\alpha_{h+v_2}\alpha_{h+u_2}\beta_{h+w}}{\alpha_{h+w}} \right) X_{h+w+v_2}X_{h+2+u_2}.
\]
Since the initial value problem has the Laurent property, $F$ must be $0$ modulo $f_{h+w}$.
Hence we have
\[
	\alpha_h \beta_{h+v_1} \beta_{h+u_1} - \frac{\beta_h\alpha_{h+v_2}\alpha_{h+u_2}\beta_{h+w}}{\alpha_{h+w}} = 0.
\]

(2) $\Rightarrow$ (3).
Suppose that $\alpha_h$ and $\beta_h$ satisfy the relation in (2) and let $\tilde{\alpha}, \tilde{\beta} \in k^{\times}$.
Let us find a nowhere vanishing function $\phi_h$ that satisfies
\begin{align*}
	\frac{\phi_h \phi_{h+w}}{\phi_{h+v_1} \phi_{h+u_1}} &= \frac{\tilde{\alpha}}{\alpha_h}, \\
	\frac{\phi_h \phi_{h+w}}{\phi_{h+v_2} \phi_{h+u_2}} &= \frac{\tilde{\beta}}{\beta_h}.
\end{align*}
Since $v_1$ and $v_2$ are linearly independent and $\alpha_{h+v_2}\beta_h / \alpha_{h+w} \beta_{h+u_1} = \beta_{h+v_1}\alpha_h / \beta_{h+w}\alpha_{h+v_2}$, the two equations
\begin{align*}
	P_{h+v_1} &= \frac{\alpha_h}{\alpha_{h+u_2}}P_h, \\
	P_{h+v_2} &= \frac{\beta_h}{\beta_{h+u_1}}P_h
\end{align*}
are compatible and thus we can find $P_h \in k^{\times}$ satisfying the above relations.
Because of this particular construction of $P_h$, the two equations
\begin{align*}
	A_{h+v_1} &= \frac{\tilde{\alpha}}{\alpha_h}A_h, \\
	A_{h+u_2} &= P_h A_h
\end{align*}
are compatible and there exists a nowhere vanishing function $A_h$ that satisfies these two relations.
In the same way, we can find a nowhere vanishing function $B_h$ that satisfies
\begin{align*}
	B_{h+v_2} &= \frac{\tilde{\beta}}{\beta_h}B_h, \\
	B_{h+u_1} &= P_h B_h.
\end{align*}

In terms of $A_h$ and $B_h$ constructed above, the two equations
\begin{align*}
	\phi_{h+u_1} &= A_h \phi_h, \\
	\phi_{h+u_2} &= B_h \phi_h.
\end{align*}
are compatible and there exists a non-vanishing function $\phi_h$, which is the function we wanted to construct, as:
\begin{align*}
	\frac{\phi_h \phi_{h+w}}{\phi_{h+v_1} \phi_{h+u_1}} &= \frac{A_{h+v_1}}{A_h} = \frac{\tilde{\alpha}}{\alpha_h}, \\
	\frac{\phi_h \phi_{h+w}}{\phi_{h+v_2} \phi_{h+u_2}} &= \frac{B_{h+v_2}}{B_h} = \frac{\tilde{\beta}}{\beta_h}.
\end{align*}

(3) $\Rightarrow$ (1) follows immediately from Theorem~\ref{thm:fz} and Proposition~\ref{prop:gauge}.
\end{proof}

\begin{proof}[Proof of Theorem~\ref{thm:mainbkp}]

(1) $\Rightarrow$ (2).
Let $h = (l, m, n) \in \mathbb{Z}^3$ and
\begin{align*}
	H	&=	\{ h - r_1 v_1 - r_2 v_2 - r_3 v_3 \, | \, r_1, r_2, r_3 \in \mathbb{Z}_{\ge -2} \} \\
	&=	\{ (l', m', n') \, | \, l' \ge l - 2, m' \ge m - 2, n' \ge n - 2 \}.
\end{align*}
Since $H \subset \mathbb{Z}^3$ is a good domain, the corresponding initial value problem
\[
	f_{h'} = \begin{cases}
		\dfrac{\alpha_{h'}f_{h'+v_1}f_{h'+u_1} + \beta_{h'}f_{h'+v_2}f_{h'+u_2} + \gamma_{h'}f_{h'+v_3}f_{h'+u_3}}{f_{h'+w}} & (h' \in H \setminus H_0) \\
		X_{h'} & (h' \in H_0)
	\end{cases}
\]
has the Laurent property.

Let
\begin{align*}
	F &= \alpha_{h}f_{h+v_1}f_{h+u_1} + \beta_{h}f_{h+v_2}f_{h+u_2} + \gamma_{h}f_{h+v_3}f_{h+u_3}, \\
	A &= k[X_{h_0}, X^{-1}_{h_0} \, | \, h_0 \in H_0], \\
	A' &= A / (f_{h+w}, f_{h+u_3}),
\end{align*}
where $(f_{h+w}, f_{h+u_3})$ is the ideal of $A$ generated by $f_{h+w}$ and $f_{h+u_3}$.
In a similar way as in the proof of Theorem~\ref{thm:mainhm}, we calculate $F$ in the ring $A'$.
Since $f_{h+w} = f_{h+u_3} = 0$, we have
\begin{align*}
	f_{h+v_1} &= \frac{\gamma_{h+v_1}f_{h+u_2}X_{h+v_1+u_3}}{X_{h+w+v_1}}, \\
	f_{h+v_2} &= \frac{\gamma_{h+v_2}f_{h+u_1}X_{h+v_2+u_3}}{X_{h+w+v_2}},
\end{align*}
and
\[
	F = \frac{f_{h+u_1}f_{h+u_2}}{X_{h+w+v_1}X_{h+w+v_2}}
	(\alpha_{h}\gamma_{h+v_1}X_{h+v_1+u_3}X_{h+w+v_2} + \beta_{h}\gamma_{h+v_2}X_{h+v_2+u_3}X_{h+w+v_1}).
\]
From the discrete BKP equation, we have
\[
	X_{h+v_1+u_3}X_{h+w+v_2} = - \frac{\beta_{h+u_3}}{\alpha_{h+u_3}}X_{h+v_2+u_3}X_{h+w+v_1},
\]
and thus
\[
	F = \frac{f_{h+u_1}f_{h+u_2}X_{h+v_2+u_3}}{X_{h+w+v_2}}\left(\beta_{h}\gamma_{h+v_2} - \frac{\alpha_h\beta_{h+u_3}\gamma_{h+v_1}}{\alpha_{h+u_3}}\right).
\]
Since the initial value problem has the Laurent property, $F$ must be $0$ in $A'$.
Hence we have
\[
	\beta_{h}\gamma_{h+v_2} - \frac{\alpha_h\beta_{h+u_3}\gamma_{h+v_1}}{\alpha_{h+u_3}} = 0.
\]
The other two relations are obtained by cyclic permutation of the indices.

(2) $\Rightarrow$ (3).
Suppose that $\alpha_h, \beta_h$ and $\gamma_h$ satisfy the relation in (2) and let $\tilde{\alpha}, \tilde{\beta}, \tilde{\gamma} \in k^{\times}$.
Let us find a nowhere vanishing function $\phi_h$ satisfying
\begin{align*}
	\frac{\phi_h \phi_{h+w}}{\phi_{h+v_1} \phi_{h+u_1}} &= \frac{\tilde{\alpha}}{\alpha_h}, \\
	\frac{\phi_h \phi_{h+w}}{\phi_{h+v_2} \phi_{h+u_2}} &= \frac{\tilde{\beta}}{\beta_h}, \\
	\frac{\phi_h \phi_{h+w}}{\phi_{h+v_3} \phi_{h+u_3}} &= \frac{\tilde{\gamma}}{\gamma_h}.
\end{align*}
As in the proof of Theorem~\ref{thm:mainhm}, it suffices to check certain compatibility conditions.

We can find nowhere vanishing functions $P_h, Q_h, R_h$ on $\mathbb{Z}^3$ that satisfy
\begin{align*}
	P_{h+v_2} &= \frac{\beta_h}{\beta_{h+u_3}}P_h, &
	Q_{h+v_1} &= \frac{\alpha_h}{\alpha_{h+u_3}}Q_h, &
	R_{h+v_1} &= \frac{\alpha_h}{\alpha_{h+u_2}}R_h, \\
	P_{h+v_3} &= \frac{\gamma_h}{\gamma_{h+u_2}}P_h, &
	Q_{h+v_3} &= \frac{\gamma_h}{\gamma_{h+u_1}}Q_h, &
	R_{h+v_2} &= \frac{\beta_h}{\beta_{h+u_1}}R_h,
\end{align*}
since the compatibility conditions follow from the relations among $\alpha_h, \beta_h, \gamma_h$.
For example, the compatibility condition concerning $P_h$ can be checked as follows:
\begin{align*}
	\frac{\beta_{h+v_3}\beta_{h+u_3}\gamma_{h}\gamma_{h+w}}{\beta_{h}\beta_{h+w}\gamma_{h+v_2}\gamma_{h+u_2}} &= \frac{\gamma_{h+w}}{\beta_{h+w}}\frac{\beta_{h+v_3}\gamma_{h}}{\gamma_{h+u_2}}\frac{\beta_{h+u_3}}{\beta_{h}\gamma_{h+v_2}} \\
	&= \frac{\gamma_{h+w}}{\beta_{h+w}}\frac{\alpha_{h}\beta_{h+v_1}}{\alpha_{h+u_2}}\frac{\alpha_{h+u_3}}{\alpha_{h}\gamma_{h+v_1}} \\
	&= \frac{\alpha_{h+v_1+v_2}\beta_{h+v_1}\gamma_{h+v_1+u_1}}{\alpha_{h+v_1+v_3}\beta_{h+v_1+u_1}\gamma_{h+v_1}} \\
	&= 1.
\end{align*}

There exist non-vanishing functions $A_h, B_h, C_h$ that satisfy
\begin{align*}
	A_{h+v_1} &= \frac{\tilde{\alpha}}{\alpha_h}A_h, &
	B_{h+v_2} &= \frac{\tilde{\beta}}{\beta_h}B_h, &
	C_{h+v_3} &= \frac{\tilde{\gamma}}{\gamma_h}C_h, \\
	A_{h+u_2} &= R_h A_h, &
	B_{h+u_1} &= R_h B_h, &
	C_{h+u_1} &= Q_h C_h, \\
	A_{h+u_3} &= Q_h A_h, &
	B_{h+u_3} &= P_h B_h, &
	C_{h+u_2} &= P_h C_h,
\end{align*}
since these equations are compatible because of the construction of $P_h, Q_h$ and $R_h$.

Finally we can find a nowhere vanishing function $\phi_h$ that satisfies
\begin{align*}
	\phi_{h+u_1} &= A_h \phi_h, \\
	\phi_{h+u_2} &= B_h \phi_h, \\
	\phi_{h+u_3} &= C_h \phi_h,
\end{align*}
since the compatibility of these equations follows from the construction of $A_h, B_h$ and $C_h$.
This function $\phi_h$ is the one we set out to find.

(3) $\Rightarrow$ (1).
Immediate from Theorem~\ref{thm:fz} and Proposition~\ref{prop:gauge}.
\end{proof}

\begin{corollary}\label{cor:irred}
If the nonautonomous Hirota-Miwa equation or the nonautonomous discrete BKP equation has the Laurent property, then every iterate (except the initial values themselves which are units) is irreducible as a Laurent polynomial of the initial values.
\end{corollary}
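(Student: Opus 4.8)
The plan is to deduce the statement from results already in hand, rather than to argue directly about the nonautonomous iterates. Three ingredients are available. First, the corollary following the elementary proof of Theorem~\ref{thm:fz} shows that in the \emph{autonomous} Hirota-Miwa and discrete BKP cases every iterate (other than the initial values) is irreducible. Second, Theorems~\ref{thm:mainhm} and~\ref{thm:mainbkp} tell us that, once the nonautonomous equation has the Laurent property, its parameters can be turned into constants by a gauge transformation, that is, condition $(1)$ forces condition $(3)$. Third, as remarked after Proposition~\ref{prop:gauge}, a gauge transformation induces an \emph{isomorphism} of the two Laurent polynomial rings of initial values, and therefore preserves irreducibility. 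The whole argument consists in assembling these three facts.

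First I would fix a good domain $H \subset \mathbb{Z}^3$ and an iterate $f_h$ with $h \in H \setminus H_0$ of the nonautonomous equation, whose initial values I denote $X_{h_0}$ ($h_0 \in H_0$). Invoking the implication $(1) \Rightarrow (3)$ of Theorem~\ref{thm:mainhm} (respectively Theorem~\ref{thm:mainbkp}), I choose a nowhere vanishing function $\phi_h \in k^{\times}$ whose gauge transformation carries the equation to an autonomous one; let $g_h = \phi_h f_h$ be the iterates of this autonomous equation, with initial values $Y_{h_0} = \phi_{h_0} X_{h_0}$. Note that the gauge transformation changes neither $H$ nor $H_0$, so the correspondence $f_h \leftrightarrow g_h$ is iterate by iterate and, in particular, $h \in H \setminus H_0$ on both sides.

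It then remains to transfer irreducibility along the gauge. Exactly as in the proof of Proposition~\ref{prop:gauge}, substituting $Y_{h_0} = \phi_{h_0} X_{h_0}$ gives the identity $g_h\big|_{Y_{h_0} = \phi_{h_0} X_{h_0}} = \phi_h f_h$ of Laurent polynomials. Since each $\phi_{h_0}$ is a unit, the assignment $Y_{h_0} \mapsto \phi_{h_0} X_{h_0}$ defines a ring isomorphism $k[Y_{h_0}^{\pm 1}] \xrightarrow{\sim} k[X_{h_0}^{\pm 1}]$; under it the element $g_h$, irreducible by the autonomous corollary, maps to $\phi_h f_h$. Because a ring isomorphism preserves irreducibility, $\phi_h f_h$ is irreducible, and since $\phi_h \in k^{\times}$ is a unit, $f_h$ is irreducible as well. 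The same reasoning with Theorem~\ref{thm:mainbkp} covers the discrete BKP equation, which proves the corollary.

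There is no genuine computational obstacle here: the content lies entirely in the three results quoted above, and the sole point requiring a moment's care is the transfer of irreducibility across the ring isomorphism induced by the gauge. This is harmless precisely because $\phi_h$ takes values in $k^{\times}$, so multiplication by $\phi_h$ alters an element only by a unit and can neither create nor destroy irreducibility.
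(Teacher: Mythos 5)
Your proof is correct and is essentially the paper's own argument: the paper's one-sentence proof (``gauge transformations preserve the irreducibility of the iterates'') implicitly chains together exactly the three ingredients you assemble, namely the implication $(1) \Rightarrow (3)$ of Theorems~\ref{thm:mainhm} and~\ref{thm:mainbkp}, the corollary on irreducibility in the autonomous case, and the ring isomorphism induced by a gauge transformation noted after Proposition~\ref{prop:gauge}. You merely make explicit the substitution $Y_{h_0} \mapsto \phi_{h_0} X_{h_0}$ and the transfer of irreducibility across that isomorphism, which is precisely what the paper leaves to the reader.
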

\begin{proof}
As explained in \textsection\ref{sec:gauge}, gauge transformations preserve the irreducibility of the iterates.
\end{proof}

As in the case of the above equations, it is possible to study the conditions for the Laurent property for other nonautonomous bilinear equations.

\begin{proposition}\label{prop:reduchm}
A reduction of the nonautonomous Hirota-Miwa equation
\[
	f_h = \frac{\alpha_h f_{h+v_1}f_{h+u_1} + \beta_h f_{h+v_2}f_{h+u_2}}{f_{h+w}}
\]
has the Laurent property if and only if $\alpha_h, \beta_h$ satisfy
\begin{equation}\label{eq:aabb}
	\alpha_{h}\alpha_{h+w}\beta_{h+v_1}\beta_{h+u_1} = \beta_{h}\beta_{h+w}\alpha_{h+v_2}\alpha_{h+u_2}
\end{equation}
for all $h \in H$.
\end{proposition}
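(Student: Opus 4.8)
I would split the equivalence into its two implications, using the two results already in hand.

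The ``if'' direction is immediate. Condition (\ref{eq:aabb}) is literally condition (\ref{eq:aabbhm}) of Theorem~\ref{thm:mainhm}, so if $\alpha_h,\beta_h$ satisfy it for all $h$, then the unreduced nonautonomous Hirota-Miwa equation itself has the Laurent property. Since the equation in the statement is by hypothesis a reduction of that Hirota-Miwa equation, Proposition~\ref{prop:reduc} shows the Laurent property descends to the reduction, and we are done.

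For the ``only if'' direction the plan is to transplant the necessity computation from the proof of Theorem~\ref{thm:mainhm}, (1)$\Rightarrow$(2), onto the reduced lattice $L'$. Write $v'_i=\varphi(v_i)$ etc., so that $v'_1+u'_1=v'_2+u'_2=w'$ and $\alpha'_{\varphi(h)}=\alpha_h$, $\beta'_{\varphi(h)}=\beta_h$. Fix $h'\in L'$ and a good domain $H'$ containing $h'$, $h'+w'$, $h'+2w'$ and the relevant neighbours; by the assumed Laurent property every iterate lies in the Laurent ring $A'$. Setting $F'=\alpha'_{h'}f'_{h'+v'_1}f'_{h'+u'_1}+\beta'_{h'}f'_{h'+v'_2}f'_{h'+u'_2}$, I would expand $f'_{h'+v'_i}$ and $f'_{h'+u'_i}$ by the equation, discard the terms divisible by $f'_{h'+w'}$ using $v'_i+u'_i=w'$, and trade $f'_{h'+w'+v'_1}f'_{h'+w'+u'_1}$ for $f'_{h'+w'+v'_2}f'_{h'+w'+u'_2}$ by the equation at $h'+w'$, arriving, modulo $f'_{h'+w'}$, at
\[
	F' \equiv (\text{nonzero factor})\times\left(\alpha'_{h'}\alpha'_{h'+w'}\beta'_{h'+v'_1}\beta'_{h'+u'_1}-\beta'_{h'}\beta'_{h'+w'}\alpha'_{h'+v'_2}\alpha'_{h'+u'_2}\right),
\]
exactly as in Theorem~\ref{thm:mainhm}. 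Since $f'_{h'}=F'/f'_{h'+w'}$ is a Laurent polynomial, $f'_{h'+w'}$ divides $F'$; provided the nonzero factor is coprime to $f'_{h'+w'}$, the bracket must vanish, which is the primed form of (\ref{eq:aabb}) at $h'$. As $\varphi$ is surjective and the parameters are $(\ker\varphi)$-invariant, this is equivalent to (\ref{eq:aabb}) for all $h\in\mathbb{Z}^3$.

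The main obstacle is exactly the coprimality needed to pass from ``$f'_{h'+w'}$ divides $F'$'' to ``the bracket vanishes.'' On $\mathbb{Z}^3$ this difficulty never appears: for the minimal good domain used in Theorem~\ref{thm:mainhm}, the second-level points $h+v_i+v_j,\,h+v_i+u_j,\,h+u_i+u_j$ all lie in the initial domain $H_0$, so the nonzero factor is a \emph{unit}. On the lower-dimensional $L'$ the shifts $v'_1,u'_1,v'_2,u'_2$ are in general only pairwise $\mathbb{Z}_{\ge0}$-linearly independent, and the downward-closure of a good domain then forces these second-level points to be genuine iterates (for instance in the discrete KdV reduction, where $u'_2=2v'_1$, so that $h'+v'_1+v'_2+w'\in H'$ whenever $h'+2w'\in H'$). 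I must therefore show separately that $f'_{h'+w'}$ is irreducible and coprime to the second-level iterates occurring in the factor. I would obtain this by running the inductive irreducibility argument of the elementary proof of Theorem~\ref{thm:fz} on $L'$, since irreducibility of every iterate forces distinct iterates to be coprime; checking that this induction still closes despite the $\mathbb{Z}_{\ge0}$-linear relations among $v'_1,u'_1,v'_2,u'_2$ is the delicate point of the whole argument.
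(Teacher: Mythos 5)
Your proposal is the paper's own proof, executed with more care than the paper itself: the published argument is two sentences, namely that the ``if'' part follows from Proposition~\ref{prop:reduc} (combined with Theorem~\ref{thm:mainhm} applied to the pulled-back parameters, exactly your setup) and that ``the converse is the same as in the proof of Theorem~\ref{thm:mainhm}.'' Consequently the coprimality obstacle you isolate is genuine and is passed over in silence by the paper; your diagnosis is also accurate in detail, since for the discrete KdV reduction ($w'=2v'_1+v'_2$) downward closure forces $h'+v'_1+v'_2+w'=h'+2w'-v'_1\in H'$, so that second-level point is an iterate in every good domain in which the computation can be run. (One small correction: a reduction requires $v'_1,u'_1,v'_2,u'_2$ to be jointly $\mathbb{Z}_{\ge 0}$-linearly independent; what fails on $L'$ is $\mathbb{Z}$-linear independence, which is what permits relations like $u'_2=2v'_1$.) Two remarks sharpen your repair. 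First, in the KdV-like cases no irreducibility input is needed: whenever $w'-v'_i-v'_j$ is itself one of the shifts, the offending iterate's own bilinear relation contains $f'_{h'+w'}$ in one of its two terms (for KdV, $h'+v'_1+v'_2=h'+u'_1$ and $u'_1+v'_1=w'$), so one further expansion modulo $f'_{h'+w'}$ turns it into a unit and the computation closes verbatim. Second, when this trick is unavailable (e.g.\ $f_m=(\alpha_m f_{m-1}f_{m-9}+\beta_m f_{m-3}f_{m-7})/f_{m-10}$, where the second-level iterate $f_{m-4}$ satisfies a relation not involving $f_{m-10}$), some coprimeness input is indeed unavoidable, but far less than irreducibility of every iterate suffices: $f'_{h'+w'}$ is explicitly $\bigl(\alpha'_{h'+w'}X'_{h'+w'+v'_1}X'_{h'+w'+u'_1}+\beta'_{h'+w'}X'_{h'+w'+v'_2}X'_{h'+w'+u'_2}\bigr)/X'_{h'+2w'}$, a binomial in four pairwise-distinct initial variables (distinct because a reduction keeps the shifts pairwise distinct) divided by a fifth, hence irreducible and prime in the Laurent ring, so it is enough to show that it divides none of the finitely many iterates appearing in the prefactor. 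That targeted non-divisibility still requires an argument the paper never supplies, so your overall verdict stands: the only-if direction as published has exactly the gap you flag, and your plan is a strengthening of, not a deviation from, the paper's route.
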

\begin{proof}
The ``if'' part follows from Proposition~\ref{prop:reduc}.
The converse is the same as in the proof of Theorem~\ref{thm:mainhm}.
\end{proof}

The Hirota-Miwa equation has infinitely many reductions.
However, the condition for the Laurent property for any possible reduction always has the same form as long as we denote shifts by $v_i, u_i, w$.
This is an important advantage of our notation.

Naturally, a similar property holds in the case of reductions of the discrete BKP equation.

\begin{proposition}
A reduction of the nonautonomous discrete BKP equation
\[
	f_h = \frac{\alpha_h f_{h+v_1}f_{h+u_1} + \beta_h f_{h+v_2}f_{h+u_2} + \gamma_h f_{h+v_3}f_{h+u_3}}{f_{h+w}}
\]
has the Laurent property if and only if $\alpha_h, \beta_h, \gamma_h$ satisfy the following three relations:
\begin{align*}
	\alpha_{h+v_2}\beta_{h}\gamma_{h+u_1} &= \alpha_{h+v_3}\beta_{h+u_1}\gamma_{h}, \\
	\alpha_{h+u_2}\beta_{h+v_3}\gamma_{h} &= \alpha_{h}\beta_{h+v_1}\gamma_{h+u_2}, \\
	\alpha_{h}\beta_{h+u_3}\gamma_{h+v_1} &= \alpha_{h+u_3}\beta_{h}\gamma_{h+v_2}.
\end{align*}
\end{proposition}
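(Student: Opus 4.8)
The plan is to prove the two implications separately, in direct analogy with the proof of Proposition~\ref{prop:reduchm} for the Hirota--Miwa case, importing for the converse the local computation carried out in the proof of Theorem~\ref{thm:mainbkp}.

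For the ``if'' part I would argue by lifting back to the full discrete BKP equation. Let $\varphi\colon \mathbb{Z}^3 \to L'$ be the $\mathbb{Z}$-linear map realising the reduction and set $\tilde{\alpha}_h = \alpha_{\varphi(h)}$, $\tilde{\beta}_h = \beta_{\varphi(h)}$, $\tilde{\gamma}_h = \gamma_{\varphi(h)}$ on $\mathbb{Z}^3$; these are automatically $(\ker\varphi)$-invariant. Since $\varphi$ carries $v_i, u_i$ to the corresponding shifts of the reduced equation, each of the three relations evaluated on $\mathbb{Z}^3$ is exactly the pull-back of the relation on $L'$ at $\varphi(h)$. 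Hence the hypothesis forces the three relations to hold for $\tilde{\alpha}, \tilde{\beta}, \tilde{\gamma}$ at every point of $\mathbb{Z}^3$, so by the implication (2)$\Rightarrow$(1) of Theorem~\ref{thm:mainbkp} the full nonautonomous discrete BKP equation with these parameters has the Laurent property. Proposition~\ref{prop:reduc} then guarantees that the Laurent property descends to the reduction $\varphi$.

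For the ``only if'' part I would reproduce, on $L'$ itself, the computation of the implication (1)$\Rightarrow$(2) of Theorem~\ref{thm:mainbkp}. Fixing $h \in L'$, I would choose a good domain on which $h$ is an interior point (so $h, h+w \in H \setminus H_0$) and on which the second-order neighbours $h+w+v_i$ and $h+v_i+u_3$ used below are initial values; the existence of such a domain rests on the pointedness of the reduction cone $\operatorname{span}_{\mathbb{Z}_{\ge 0}}(v'_i, u'_i)$ established in the proof of Proposition~\ref{prop:reduc}(1). Working in $A' = A/(f_{h+w}, f_{h+u_3})$ and using the discrete BKP relation at $h+u_3$ exactly as before, one obtains
\[
	F = \frac{f_{h+u_1} f_{h+u_2} X_{h+v_2+u_3}}{X_{h+w+v_2}}\left( \beta_h \gamma_{h+v_2} - \frac{\alpha_h \beta_{h+u_3}\gamma_{h+v_1}}{\alpha_{h+u_3}} \right).
\]
Because the parenthesised factor is a scalar, the Laurent property (which forces $F \equiv 0$) together with the non-vanishing of the prefactor in $A'$ yields $\alpha_h\beta_{h+u_3}\gamma_{h+v_1} = \alpha_{h+u_3}\beta_h\gamma_{h+v_2}$, i.e.\ the third relation; the remaining two then follow by the cyclic permutation $1 \to 2 \to 3 \to 1$, $\alpha \to \beta \to \gamma \to \alpha$.

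The hard part will be the final step of the converse: in the ambient lattice $\mathbb{Z}^3$ the points entering the prefactor are distinct and the initial values are algebraically independent, but under a reduction the pointed-but-not-strict cone condition allows coincidences (for instance $v_3 - v_2$ may now lie in $\operatorname{span}_{\mathbb{Z}_{\ge 0}}(v'_i,u'_i)$), so one cannot simply invoke independence. I expect to handle this through the irreducibility inherited from Corollary~\ref{cor:irred}: $f_{h+u_1}$ and $f_{h+u_2}$ are irreducible and coprime to both $f_{h+w}$ and $f_{h+u_3}$, hence remain nonzero modulo the ideal, while the $X$-factors are units. Any coincidence among the relevant lattice points merely identifies two initial values without annihilating the prefactor, and a short check (substituting the merged variable into the BKP relation at $h+u_3$) shows that the scalar coefficient forced to vanish is unchanged. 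This is the point that must be verified with care before the derivation of the relation goes through.
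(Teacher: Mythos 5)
Your overall architecture coincides with the paper's own (essentially one-line) proof: the ``if'' direction by pulling the coefficients back to $\mathbb{Z}^3$, invoking Theorem~\ref{thm:mainbkp} (2)$\Rightarrow$(1) and then Proposition~\ref{prop:reduc}, and the ``only if'' direction by repeating on $L'$ the computation of Theorem~\ref{thm:mainbkp} (1)$\Rightarrow$(2) in $A' = A/(f_{h+w}, f_{h+u_3})$; this is exactly how the paper treats the Hirota--Miwa analogue (Proposition~\ref{prop:reduchm}), and the BKP case is stated as ``similar.'' Your ``if'' part is complete and correct, and you are right that the crux of the converse is the final nonvanishing step.

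The gap is in your proposed resolution of that step. First, Corollary~\ref{cor:irred} concerns the nonautonomous Hirota--Miwa and discrete BKP equations on $\mathbb{Z}^3$; it says nothing about iterates of a \emph{reduced} equation on $L'$, and nowhere does the paper establish irreducibility or pairwise coprimeness of iterates for reductions (in \textsection\ref{sec:deg} the author resorts to positivity over $\mathbb{R}$ precisely because no such result is available there). Trying to obtain it by the same route as Corollary~\ref{cor:irred} would require knowing the reduced equation is gauge-equivalent to an autonomous one, i.e.\ the very relations you are proving, so that route is circular. Second, even granting irreducibility and coprimeness, the inference ``$f_{h+u_1}$ and $f_{h+u_2}$ are coprime to both $f_{h+w}$ and $f_{h+u_3}$, hence remain nonzero modulo the ideal'' is invalid as ring theory: coprimeness to each generator separately does not prevent membership in the ideal they generate jointly --- in $k[x,y]$ the element $x+y$ is irreducible and coprime to both $x$ and $y$, yet $x+y \in (x,y)$. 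What is actually needed is $f_{h+u_1}f_{h+u_2} \notin (f_{h+w}, f_{h+u_3})$, and this your argument does not deliver (to be fair, the paper silently assumes the analogous nonvanishing even on $\mathbb{Z}^3$). A workable repair is to fix the good domain to be the upward closure of $\{h+w,\, h+v_1+w,\, h+v_2+w,\, h+u_3+w\}$, so that $f_{h+w}$, $f_{h+u_3}$, $f_{h+u_1}$, $f_{h+u_2}$ are explicit low-depth Laurent polynomials in the initial values, and then exhibit a specialization of the initial values (over $\bar{k}$, which suffices for non-membership over $k$) at which $f_{h+w} = f_{h+u_3} = 0$ while $f_{h+u_1}f_{h+u_2} \neq 0$; this must be checked against the coincidences of lattice points that reductions allow, which is exactly the verification your last paragraph promises but does not carry out.
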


It should be noted that the condition (\ref{eq:aabbhm}) for the nonautonomous Hirita-Miwa equation to possess the Laurent property, coincides with the condition found in \cite{nonhm1} for it to pass the singularity confinement test.
In this sense, condition (\ref{eq:aabbhm}) can be regarded as a condition for the integrability of the nonautonomous Hirota-Miwa equation.


\section{Structure of denominators and algebraic entropy}\label{sec:deg}

In this section, we study the denominators of the solutions to equations with the Laurent property.
One aim of this section is to calculate the algebraic entropy of the equation, which is an important integrability criterion for discrete systems defined on a one-dimensional lattice.
Except in Theorem~\ref{thm:ent}, we consider only the case where the base field is $\mathbb{R}$ since Lemma~\ref{lem:posi} only holds over $\mathbb{R}$.

\begin{definition}[algebraic entropy \cite{entropy}]
Consider a discrete equation defined by a rational function on a one-dimensional lattice.
Let $f_0, \ldots, f_l$ be its initial values and let $(f_m)_{m \ge 0}$ be the solution.
The algebraic entropy of this equation is
\[
	\lim_{m \to +\infty}\frac{1}{m}\log(\deg f_m),
\]
where $\deg f_m$ stands for the degree of $f_m$ as a rational function of $f_0, \ldots, f_l$.
\end{definition}

Any equation with zero algebraic entropy is said to be integrable.

\begin{definition}
Let $f$ be a rational function of $z = (z_1, z_2, \cdots)$ over $\mathbb{R}$.
We shall say that $f$ is positive (resp.\ nonnegative) if for every sequence of positive real numbers $a = (a_1, a_2, \cdots)$, $f(a)$ can be defined as a real number and is positive (resp.\ nonnegative).
\end{definition}

It is clear that the sum, product and quotient of positive functions are again positive.

\begin{lemma}\label{lem:posi}
Let $f, g$ be positive polynomials of $z = (z_1, z_2, \cdots)$.
If $f + g$ is divisible by $z_j$, then so are $f$ and $g$.
\end{lemma}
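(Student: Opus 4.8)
The plan is to prove the contrapositive via an evaluation argument. Since $f$ and $g$ are positive polynomials, so is their sum, and $f+g$ being divisible by $z_j$ means $(f+g)|_{z_j=0}=0$ as a polynomial identity. The key observation is that setting $z_j=0$ and then evaluating at positive reals for the remaining variables is a limiting case of evaluating at positive reals: if we let $z_j \to 0^+$ while holding all other $z_i$ at fixed positive values $a_i$, then $f(a)$ tends to $f|_{z_j=0}(a')$ and likewise for $g$, where $a'$ denotes the remaining coordinates. The hypothesis that $f$ and $g$ are nonnegative on positive inputs will pass to this boundary limit, yielding $f|_{z_j=0}(a') \ge 0$ and $g|_{z_j=0}(a') \ge 0$ for all positive $a'$.

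The heart of the argument is then the following elementary fact over $\mathbb{R}$: if $p$ and $q$ are polynomials that are both nonnegative for all positive values of their arguments, and if $p+q$ is identically zero, then $p$ and $q$ are each identically zero. Indeed, from $p+q\equiv 0$ we get $q=-p$, so $p$ is both nonnegative and nonpositive on the positive orthant, forcing $p\equiv 0$ there; since the positive orthant is Zariski-dense in affine space (it has nonempty interior, so any polynomial vanishing on it vanishes identically), we conclude $p\equiv 0$ and hence $q\equiv 0$ as polynomials. Applying this with $p=f|_{z_j=0}$ and $q=g|_{z_j=0}$ shows that both restrictions vanish identically, which is exactly the statement that $z_j$ divides $f$ and $z_j$ divides $g$.

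First I would make precise the claim that the restriction $f|_{z_j=0}$ inherits nonnegativity on positive inputs, handling the mild subtlety that $f|_{z_j=0}$ need not itself be a positive function in the strong sense of the definition (it is only \emph{nonnegative}, and only after taking a limit); the limiting argument above addresses this cleanly because a polynomial is continuous, so no points where $f$ fails to be defined intrude. Next I would invoke the density of the positive orthant to upgrade ``vanishes on positive inputs'' to ``vanishes identically,'' and finally translate identical vanishing of the restriction back into divisibility by $z_j$.

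The main obstacle I anticipate is purely a matter of bookkeeping rather than depth: one must be careful that the two polynomials in question are genuinely nonnegative (not merely positive, since the zero polynomial is a limiting possibility) and that the sign-trapping argument $0 \le p = -q \le 0$ is applied on a set dense enough to conclude an identity of polynomials. The reason the lemma requires the base field to be $\mathbb{R}$ is concentrated exactly here: over a field without an order, the step ``$p \ge 0$ and $p \le 0$ imply $p=0$'' is unavailable, and indeed the conclusion can fail. I expect no serious computation, only this careful limiting-and-density argument.
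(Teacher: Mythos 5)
Your proof is correct and takes essentially the same route as the paper: the paper writes $f = z_j f_1 + f_2$ with $f_2 = f|_{z_j=0}$, establishes $f_2(a) \ge 0$ for positive $a$ by exactly your limiting argument ($z_j = \varepsilon \to 0^+$ with the other variables held fixed), and concludes from $f_2 = -g_2$ by the same sign-trapping. The only difference is that you spell out the final density step (a polynomial vanishing on the positive orthant vanishes identically), which the paper leaves implicit.
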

\begin{proof}
We may assume $j = 1$.
Let us decompose $f$ and $g$ into
\begin{align*}
	f &= z_1 f_1 + f_2, \\
	g &= z_1 g_1 + g_2,
\end{align*}
where $f_2$ and $g_2$ do not depend on $z_1$.
Since $f + g$ is divisible by $z_1$, we have
\[
	0 = (f + g)\big|_{z_1 = 0} = f_2 + g_2
\]
and $f_2 = - g_2$.
Thus, it is sufficient to show that $f_2$ and $g_2$ are nonnegative.

Let $a = (a_1, a_2, \cdots)$ be an arbitrary sequence of positive real numbers and let $\tilde{a} = (\varepsilon, a_2, a_3, \cdots)$ for a small positive real number $\varepsilon$.
Since $f_2$ does not depend on $z_1$, we have $f_2(a) = f_2(\tilde{a})$.
The positivity of $f$ implies
\[
	0 < f(\tilde{a}) = \varepsilon f_1(\tilde{a}) + f_2(\tilde{a}) = \varepsilon f_1(\tilde{a}) + f_2(a)
\]
and, by taking the limit $\varepsilon \to +0$, we have
\[
	0 \le f_2(a).
\]
The positivity of $g_2$ can be shown in the same way.
\end{proof}

\begin{example}\label{exa:ex4}
Consider the equation
\[
\left\{\begin{array}{l}
	f_{m} = \dfrac{\alpha_m f^2_{m-1} + \beta_m}{f_{m-2}}, \\
	f_0 = X, f_1 = Y,
\end{array}\right.
\]
where $\alpha_m, \beta_m$ are positive real numbers satisfying $\alpha_m \alpha_{m-2}\beta^2_{m-1} = \beta_m \beta_{m-2}$.
This equation has the Laurent property (Example~\ref{exa:ex3}).
Let us decompose $f_m$ into $f_m = p_m / q_m$, 
where $p_m$ is a polynomial in $X$ and $Y$, and where $q_m$ is a monomial in $X$ and $Y$ with coefficient $1$; $p_m$ and $q_m$ are coprime as polynomials.
It is clear that $f_m, p_m, q_m$ are all positive.
Let us show that $q_m = X^{m-1}Y^{m-2}$ for $m \ge 2$.

First we show by induction that $p_m$ cannot be divided by $X$ or $Y$, and that $q_m$ can be divided by $q_{m-1}$ for $m \ge 2$.
A direct calculation shows that the statements are true for $m = 2, 3$.
The Laurent property of the equation and the expression
\[
	f_m = \frac{\alpha_m p^2_{m-1} + \beta_m q^2_{m-1}}{p_{m-2}} \frac{q_{m-2}}{q^2_{m-1}}
\]
imply that $(\alpha_m p^2_{m-1} + \beta_m q^2_{m-1}) / p_{m-2}$ is a Laurent polynomial.
It is easy to see that this is in fact a polynomial.
Since $f_m, p_m, q_m$ are positive and $p_{m-1}$ has no monomial factor, it follows from Lemma~\ref{lem:posi} that $\alpha_m p^2_{m-1} + \beta_m q^2_{m-1}$ has no monomial factor, either.
Since $p_{m-2}$ has no monomial factor and $q^2_{m-1} / q_{m-2}$ is a monomial, we obtain
\begin{align*}
	p_m &= \frac{\alpha_m p^2_{m-1} + \beta_m q^2_{m-1}}{p_{m-2}}, \\
	q_m &= \frac{q^2_{m-1}}{q_{m-2}}
\end{align*}
for $m \ge 4$.
Therefore, the statements are true for $m \ge 4$.

It follows from the above equation for $q_m$ that $q_m = X^{m-1}Y^{m-2}$ for $m \ge 4$.
Since $\deg p_m = \deg q_m + 1$, we have $\deg f_m = 2m - 2$.
Hence, the algebraic entropy of this equation is $0$.
\end{example}

\begin{example}
Let $r$ be an integer greater than $2$ and consider the equation
\[
\left\{\begin{array}{l}
	f_{m} = \dfrac{f^r_{m-1} + 1}{f_{m-2}}, \\
	f_0 = X, f_1 = Y.
\end{array}\right.
\]
This system has the Laurent property (Example~\ref{exa:ex3}).
Let us denote $f_m = p_m / q_m$ in the same way as in Example~\ref{exa:ex4}.
Then, it can be easily shown that
\begin{align*}
	q_m &= \frac{q^r_{m-1}}{q_{m-2}}, &
	p_m &= \frac{p^r_{m-1} + q^r_{m-1}}{p_{m-2}}, \\
	q_2 &= X, &
	p_2 &= Y^r + 1, \\
	q_3 &= X^r Y, &
	p_3 &= (Y^r + 1)^r + X^r.
\end{align*}
Thus we have
\begin{align*}
	\deg p_m &= \mathcal{O}(\lambda^m), \\
	\deg q_m &= \mathcal{O}(\lambda^m),
\end{align*}
where $\lambda = (r + \sqrt{r^2 - 4}) / 2$.
Hence, the algebraic entropy is $\log \lambda > 0$ and the equation is non-integrable.
\end{example}

Next we consider the denominators of the solutions to discrete bilinear equations that have the Laurent property.
Let $\alpha^{(i)}_h \in \mathbb{R}_{>0}$ and let
\begin{equation}
	f_h = \begin{cases}
		\dfrac{\alpha^{(1)}_h f_{h+u_1}f_{h+u_2} + \cdots + \alpha^{(n)}_h f_{h+u_n}f_{h+u_n}}{f_{h+w}} & (h \in H \setminus H_0) \\
		X_h & (h \in H_0)
	\end{cases}
\end{equation}
be an initial value problem for a discrete bilinear equation with the Laurent property.
Let us decompose $f_h$ as $f_h = p_h / q_h$ as before.
It is clear that $f_h, p_h, q_h$ are all positive.

\begin{lemma}\label{lem:denom}
$q_h$ satisfies the following relation:
\begin{equation}\label{eq:denom}
	q_h = \begin{cases}
		1	&	(h \in H_0), \\
		X_{h+w}\operatorname{LCM}\limits_{1 \le j \le n}(q_{h+v_j}q_{h+u_j})	&	(h + w \in H_0), \\
		\operatorname{LCM}_{1 \le j \le n}(q_{h+v_j}q_{h+u_j}) / q_{h+w}	&	(\text{otherwise}),
	\end{cases}
\end{equation}
where $\operatorname{LCM}$ stands for the least common multiple as a monomial.
In particular, $d^{(h_0)}_{h} = \deg_{X_{h_0}}q_{h}$ satisfies the following $(\max, +)$-equation:
\begin{equation}\label{eq:dh}
	d^{(h_0)}_h = \begin{cases}
		0	&	(h \in H_0), \\
		1 	&	(h + w = h_0), \\
		\max\limits_{1 \le j \le n}(d^{(h_0)}_{h+v_j} + d^{(h_0)}_{h+u_j})	&	(h + w \in H_0, h + w \ne h_0), \\
		\max\limits_{1 \le j \le n}(d^{(h_0)}_{h+v_j} + d^{(h_0)}_{h+u_j}) - d^{(h_0)}_{h+w}	&	(\text{otherwise}).
	\end{cases}
\end{equation}
\end{lemma}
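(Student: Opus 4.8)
The plan is to establish the monomial recurrence (\ref{eq:denom}) for $q_h$ by induction on $h \in H$ with respect to the semi-order of Definition~\ref{defi:order} (which is well-founded by the first condition on a good domain), and then to obtain the $(\max,+)$-equation (\ref{eq:dh}) by simply reading off the exponent of a fixed initial variable $X_{h_0}$: since $\deg_{X_{h_0}}\operatorname{LCM}_j(q_{h+v_j}q_{h+u_j}) = \max_j(d^{(h_0)}_{h+v_j} + d^{(h_0)}_{h+u_j})$, and since $\deg_{X_{h_0}}X_{h+w} = 1$ precisely when $h+w = h_0$, the four cases of (\ref{eq:dh}) are exactly the three cases of (\ref{eq:denom}) with the extra split according to whether the exceptional monomial $X_{h+w}$ equals $X_{h_0}$. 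The minimal elements of $H$ lie in $H_0$, so the base case gives $q_h = 1$. For the inductive step I substitute $f_{h'} = p_{h'}/q_{h'}$ into the equation; writing $M = \operatorname{LCM}_j(q_{h+v_j}q_{h+u_j})$ and the cofactor monomials $N_j = M/(q_{h+v_j}q_{h+u_j})$ and clearing denominators, one gets $f_h = P q_{h+w}/(M p_{h+w})$ with $P = \sum_j \alpha^{(j)}_h N_j p_{h+v_j} p_{h+u_j}$.

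The heart of the proof is to show that $P$ carries no monomial factor, so that no spurious cancellation against $M$ occurs. Here positivity does the work. Fix a variable $X_{h_0}$ and suppose $X_{h_0} \mid P$. Each summand $\alpha^{(j)}_h N_j p_{h+v_j} p_{h+u_j}$ is a positive polynomial, so applying Lemma~\ref{lem:posi} repeatedly forces $X_{h_0}$ to divide every summand. By the defining property $\gcd_j N_j = 1$ of the least common multiple there is an index $j^*$ with $X_{h_0} \nmid N_{j^*}$; since by the induction hypothesis the numerators $p_{h+v_{j^*}}, p_{h+u_{j^*}}$ have no monomial factor and $X_{h_0}$ is prime, $X_{h_0}$ divides neither of them, whence $X_{h_0}$ cannot divide that summand. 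This contradiction shows $X_{h_0} \nmid P$, and as $X_{h_0}$ was arbitrary, $P$ has no monomial factor.

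Granting this, the three cases follow from $f_h = P q_{h+w}/(M p_{h+w})$. If $h \in H_0$ we are in the base case. If $h+w \in H_0$ then $q_{h+w} = 1$ and $p_{h+w} = X_{h+w}$, so the denominator $X_{h+w} M$ is already a monomial coprime to $P$, giving $q_h = X_{h+w} M$, the second line of (\ref{eq:denom}). Otherwise $h+w$ is non-initial, so $p_{h+w}$ has no monomial factor; the Laurent property forces the denominator of $f_h$ to be a monomial, hence $p_{h+w} \mid P q_{h+w}$, and since $\gcd(p_{h+w}, q_{h+w}) = 1$ we get $p_{h+w} \mid P$. Then $f_h = (P/p_{h+w}) q_{h+w}/M$, and because $P/p_{h+w}$ again has no monomial factor, the factor $q_{h+w}$ in the numerator must cancel entirely into $M$; this simultaneously yields $q_{h+w} \mid M$ and $q_h = M/q_{h+w}$, completing the induction.

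The step I expect to be the main obstacle is the order computation at the initial points, which is what I glossed over in choosing the index $j^*$ above. If some parent $h+v_{j}$ or $h+u_{j}$ coincides with $h_0$ itself, then $p_{h+v_j} = X_{h_0}$ genuinely carries the factor $X_{h_0}$, so one cannot simply appeal to the induction hypothesis at that index, and one must verify that a suitable minimizing index $j^*$ with $X_{h_0} \nmid N_{j^*}$ can still be found whose parents avoid $h_0$. This is governed by the geometry of the good domain: using $h+v_j, h+u_j \le h+w$ and the fact that $X_{h_0}$ can appear in $q_{h'}$ only for $h' \ge h_0$, one sees that the only way $X_{h_0}$ enters legitimately is through $X_{h+w}$ when $h+w = h_0$, which is precisely the boundary case isolated by the line $d^{(h_0)}_h = 1$ of (\ref{eq:dh}). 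Once this boundary bookkeeping is settled, Lemma~\ref{lem:posi} supplies the rest and (\ref{eq:dh}) is immediate.
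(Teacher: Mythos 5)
Your overall strategy is the paper's: induction along the semi-order of Definition~\ref{defi:order}, the decomposition $f_h = P q_{h+w}/(M p_{h+w})$, and Lemma~\ref{lem:posi} combined with $\gcd_j N_j = 1$ to show that $P$ has no monomial factor. The genuine gap is in your third case. To conclude $q_h = M/q_{h+w}$ you need the divisibility $q_{h+w} \mid M = \operatorname{LCM}_j(q_{h+v_j}q_{h+u_j})$, and your justification --- that ``the factor $q_{h+w}$ in the numerator must cancel entirely into $M$'' because $P/p_{h+w}$ has no monomial factor --- is not an argument. If $q_{h+w}$ did not divide $M$, then $f_h = (P/p_{h+w})\, q_{h+w}/M$ would still be a perfectly good Laurent polynomial: its reduced numerator would simply retain the monomial factor $q_{h+w}/\gcd(q_{h+w},M)$ and its reduced denominator would be $M/\gcd(q_{h+w},M)$. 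Neither the Laurent property nor the absence of monomial factors in $P/p_{h+w}$ forbids this; the only thing it would contradict is formula (\ref{eq:denom}) itself, which is what you are trying to prove. The paper closes this hole by strengthening the induction: it proves simultaneously with (\ref{eq:denom}) that denominators are divisible along the semi-order (its statement (1); note its wording reverses the divisibility, evidently a typo, since the proof uses it in the form ``$q_{h+w}$ divides $q_{h+v_j}$''). Since $h+w \le h+v_j$, this auxiliary statement gives $q_{h+w} \mid q_{h+v_j} \mid M$ immediately, and it propagates through the induction together with (\ref{eq:denom}). Without carrying some such statement, your induction does not close.

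The second issue is the one you flagged yourself: if some parent equals $h_0$, the index $j^*$ supplied by $\gcd_j N_j = 1$ may be exactly the bad index. The paper's resolution is a separate case: when $h_0$ is a parent, the inductively known formula (\ref{eq:denom}) shows that $X_{h_0}$ cannot occur in any parent denominator (occurrence of $X_{h_0}$ in $q_{h'}$ forces $h_0 \le h'+w$, and $h_0 = h+v_1 \le h+x+w$ for a shift $x$ would give $0 = x + u_1 + (\text{nonnegative})$, contradicting the $\mathbb{Z}_{\ge 0}$-linear independence); hence $X_{h_0}$ divides neither $M$ nor any $N_i$, so any index whose parents avoid $h_0$ yields a summand not divisible by $X_{h_0}$, giving the contradiction. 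Your sketch points in this direction, but two details are off: your inequality ``$h+v_j, h+u_j \le h+w$'' is backwards (in the order of Definition~\ref{defi:order} one has $h+w \le h+v_j, h+u_j$), and the fact you invoke ($X_{h_0}$ occurs in $q_{h'}$ only for $h' \ge h_0$) is weaker than what is needed, since distinct parents can be comparable (e.g.\ for the discrete KdV shifts, $h+u_2 \le h+v_1$), so it does not by itself exclude the bad occurrences; one needs the sharper bound $h_0 \le h'+w$.
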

\begin{proof}
Let $\le$ be the semi-order on $H$ defined in Definition~\ref{defi:order}.
We show the following four statements, by induction on $h \in H$:
\begin{itemize}
\item[(1)]
If $h' \le h$, then $q_{h'}$ is divisible by $q_h$.
\item[(2)]
If $h' \le h$ and $h' \ne h$, then $q_h \ne q_{h'}$.
\item[(3)]
$p_h$ cannot be divided by $X_{h_0}$ unless $h = h_0 \in H_0$.
\item[(4)]
$q_h$ satisfies (\ref{eq:denom}).
\end{itemize}

Since the case $h \in H_0$ is trivial, we may assume $h \in H \setminus H_0$.

First consider the case $h + w \in H_0$.
Let
\begin{align*}
	G &= \operatorname{LCM}_{j}(q_{h+v_j}q_{h+u_j}), \\
	G_j &= \frac{G}{q_{h+v_j}q_{h+u_j}}, \\
	F &= \alpha^{(1)}_h p_{h+v_1}p_{h+u_1} G_1 + \cdots + \alpha^{(n)}_h p_{h+v_n}p_{h+u_n} G_n.
\end{align*}
Then $f_h = F / (X_{h+w}G)$.
We will show that $F$ cannot be divided by $X_{h_0}$, for any $h_0 \in H_0$.
Assume that $X_{h_0}$ divides $F$.
Lemma~\ref{lem:posi} implies that $X_{h_0}$ divides $\alpha^{(j)}_h p_{h+v_j}p_{h+u_j} G_j$ for all $j$ since $\alpha^{(j)}_h p_{h+v_j}p_{h+u_j} G_j$ are positive polynomials.
If $h_0 \notin \{ h + v_1, \ldots, h + u_n \}$, then the induction hypothesis (3) implies that $p_{h+v_j}$ and $p_{h+u_j}$ cannot be divided by $X_{h_0}$.
Therefore $X_{h_0}$ must divide all $G_j$, which contradicts the definition of $G$.
On the other hand, if $h_0 = h + v_1$, then $X_{h_0}$ does not divide $p_{h+v_2}p_{h+u_2}$.
Thus $G_2$ is divisible by $X_{h_0}$ and so is $G$ since $G = G_2 q_{h+v_2}q_{h+u_2}$.
However, the induction hypothesis (4) implies that $X_{h_0}$ divides none of $q_{h+v_1}, \ldots, q_{h+u_n}$, which leads to a contradiction.
Since $X_{h_0}$ therefore cannot divide $F$, we have
\begin{align*}
	p_h &= F, \\
	q_h &= X_{h+w}G,
\end{align*}
which shows that (3) and (4) are true.
(1) and (2) follow immediately from the above expressions and from the induction hypothesis (1).

Next, we consider the case $h + w \in H \setminus H_0$.
Let
\begin{align*}
	G &= \frac{\operatorname{LCM}_{j}(q_{h+v_j}q_{h+u_j})}{q_{h+w}}, \\
	G_j &= \frac{q_{h+w}G}{q_{h+v_j}q_{h+u_j}}, \\
	F &= \alpha^{(1)}_h p_{h+v_1}p_{h+u_1} G_1 + \cdots + \alpha^{(n)}_h p_{h+v_n}p_{h+u_n} G_n.
\end{align*}
These are all polynomials since the induction hypothesis implies that $q_{h+w}$ divides $q_{h+v_1}, \ldots, q_{h+u_n}$.
Thus we have $f_h = F / (p_{h+w}G)$.
As in the case $h + w \in H_0$, $F$ cannot be divided by $X_{h_0}$.
Furthermore, it follows from the Laurent property of the equation that $F / p_{h+w}$ is a Laurent polynomial.
Since the induction hypothesis (3) implies that $p_{h+w}$ has no monomial factor, we have
\begin{align*}
	p_h &= \frac{F}{p_{h+w}}, \\
	q_h &= G,
\end{align*}
which shows that (3) and (4) are true;
(1) and (2) immediately follow from the above expressions and from the induction hypotheses.

Considering the degree of each initial value, we obtain the equation (\ref{eq:dh}).
\end{proof}

\begin{lemma}\label{lem:hmbkp}
In the case of the Hirota-Miwa equation and the discrete BKP equation, $q_h$ can be written explicitly as
\[
	q_h = \prod_{h_0 \in H_0, h_0 \le h + w}X_{h_0},
\]
where $\le$ is the semi-order defined in Definition~\ref{defi:order}.
\end{lemma}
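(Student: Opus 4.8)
The plan is to run the induction furnished by Lemma~\ref{lem:denom} and to check that the proposed closed form satisfies the very same recursion. Write $C=\operatorname{span}_{\mathbb{Z}_{\ge 0}}(v_i,u_i)$, so that $h_1\le h_2\iff h_1-h_2\in C$, and for a fixed initial point $h_0\in H_0$ let $[\,h_0\le y\,]$ denote the indicator that equals $1$ when $h_0\le y$ and $0$ otherwise. Since $q_h=\prod_{h_0}X_{h_0}^{\,d^{(h_0)}_h}$, the asserted formula is equivalent to $d^{(h_0)}_h=[\,h_0\le h+w\,]$ for every $h_0$. Because the $(\max,+)$-recursion (\ref{eq:dh}) determines $d^{(h_0)}_h$ uniquely by induction along the semi-order, with base value $0$ on $H_0$, it suffices to verify that $D^{(h_0)}_h:=[\,h_0\le h+w\,]$ obeys (\ref{eq:dh}).

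Put $g=h+w$. The neighbour terms in (\ref{eq:dh}) are $D^{(h_0)}_{h+v_j}=[\,h_0\le g+v_j\,]$, $D^{(h_0)}_{h+u_j}=[\,h_0\le g+u_j\,]$ and $D^{(h_0)}_{h+w}=[\,h_0\le g+w\,]$. The first two cases of (\ref{eq:dh}) are immediate: for $h\in H_0$ upward-closedness of $H$ forces $\{h_0\in H_0:h_0\le g\}=\emptyset$, while $h+w=h_0$ gives $[\,h_0\le g\,]=1$ by reflexivity. In the remaining two cases one has $g\ne h_0$, and since $g+v_j,g+u_j,g+w\le g$, transitivity of $\le$ makes all neighbour indicators vanish whenever $h_0\not\le g$, so both recursions read $0=0$. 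Thus everything reduces to the single identity, valid for $h_0\le g$ with $h_0\ne g$,
\[
\max_{1\le j\le n}\bigl([\,h_0\le g+v_j\,]+[\,h_0\le g+u_j\,]\bigr)=1+[\,h_0\le g+w\,],
\]
after noting in the third case that $g\in H_0$ forces $g+w\notin H$, whence $[\,h_0\le g+w\,]=0$.

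I would prove this identity from two facts. The elementary one, valid for any bilinear equation, gives the lower bound: if $h_0\le g$ and $h_0\ne g$, then $h_0-g$ is a nonzero element of $C$, so some generator occurs with positive coefficient and dropping it yields $h_0\le g+v_j$ or $h_0\le g+u_j$ for that $j$; hence the maximum is at least $1$. The delicate one, specific to the Hirota-Miwa and discrete BKP equations, gives the upper bound: for each fixed $j$,
\[
(v_j+C)\cap(u_j+C)=w+C,\qquad w=v_j+u_j,
\]
so that $h_0\le g+v_j$ and $h_0\le g+u_j$ together force $h_0\le g+w$. Granting this, no $j$ contributes $2$ to the maximum unless $h_0\le g+w$, in which case monotonicity of the indicator makes every contribution equal to $2$; combining the two bounds gives exactly $1+[\,h_0\le g+w\,]$ and closes the induction.

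The main obstacle is the cone identity $(v_j+C)\cap(u_j+C)=w+C$, whose nontrivial inclusion $\subseteq$ genuinely uses the geometry of these two equations and fails for generic bilinear equations; for the discrete KdV equation, for instance, $C$ is a quadrant and already $\{x\le-1,\ y\le-1\}\not\subseteq\{x\le-2,\ y\le-1\}$. I would establish it by describing $C$ explicitly. For the discrete BKP equation $C$ is the nonpositive octant and $\le$ is the coordinatewise order; for each $j$ the shifts $v_j$ and $u_j$ decrement complementary coordinates by one, so the two conditions combine precisely into $h_0\le g+w$. For the Hirota-Miwa equation a short computation gives $C=\{(x,y,z):x\le0,\ z\le0,\ y\ge0,\ x+y+z\le0\}$, and one checks for $j=1,2$ that the linear inequalities cutting out $(v_j+C)\cap(u_j+C)$ coincide with those cutting out $w+C$.
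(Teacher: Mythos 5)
Your proposal is correct and takes essentially the same approach as the paper: both prove that $d^{(h_0)}_h$ equals $1$ exactly when $h_0 \le h+w$ by induction on the $(\max,+)$-equation (\ref{eq:dh}), with the lower bound obtained by dropping a generator from the nonzero cone element $h_0-(h+w)$, and the upper bound from the fact that $h_0 \le g+v_j$ and $h_0 \le g+u_j$ together force $h_0 \le g+w$. The only real difference is presentational: the paper leaves this last fact as ``a direct coordinate calculation,'' whereas you verify it by describing the cone $C$ explicitly for each equation (and your remark that the cone identity fails for the discrete KdV equation is a nice bonus).
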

\begin{proof}
Let $h_0 \in H_0$ and $d_h = d^{(h_0)}_h = \deg_{X_{h_0}}q_h$.
Then, $d_h$ satisfies (\ref{eq:dh}).
Let us show that
\[
	d_h = \begin{cases}
		1 	&	(h \ge h_0 - w), \\
		0	&	(\text{otherwise}).
	\end{cases}	
\]

Since an easy induction shows that $d_h = 0$ unless $h \ge h_0 - w$, it is sufficient to show (by induction on $h \in H$) that $d_h = 1$ for $h \ge h_0 - w$.
By (\ref{eq:dh}), we have
\[
	d_h = \max\limits_{j}(d_{h+v_j} + d_{h+u_j}) - d_{h+w}.
\]

If $h$ satisfies $h + w \ge h_0 - w$, then $h + v_1, \ldots, h + u_n \ge h_0 - w$.
The induction hypothesis then implies $d_{h+v_1} = \cdots = d_{h+u_n} = d_{h+w} = 1$ and thus we have $d_h = 1$.

If $h$ does not satisfy $h + w \ge h_0 - w$, then $d_{h+w} = 0$.
Since $h \ge h_0 - w$ and $h \ne h_0 - w$, there exists $x \in \{ v_1, \ldots, u_n \}$ such that $h + x \ge h_0 - w$.
The induction hypothesis in that case implies $d_{h+x} = 1$ and thus we have $d_h \ge 1$.
On the other hand, a direct coordinate calculation shows that if $x \in \mathbb{Z}^3$ satisfies $h + v_j \ge x$ and $h + u_j \ge x$ for some $j$, then $x$ satisfies $h + w \ge x$.
Therefore, for any $j$, either $h + v_j \ge h_0 - w$ or $h + u_j \ge h_0 - w$ does not hold.
Hence, we have $d_{h+v_j} + d_{h+u_j} \le 1$ and $d_h = 1$.
\end{proof}

\begin{theorem}\label{thm:ent}
Consider a discrete bilinear equation on a one-dimensional lattice that can be obtained as a reduction from the nonautonomous Hirota-Miwa equation or the nonautonomous discrete BKP equation, over any base field.
If this equation has the Laurent property, then its degree growth is at most quadratic.
In particular, its algebraic entropy is zero.
\end{theorem}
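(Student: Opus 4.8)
The plan is to work first over $k = \mathbb{R}$ with all parameters positive and to reduce the entire statement to a single lattice-point count for the \emph{denominator}, postponing a general base field to the very end. Write each iterate as $f_h = p_h/q_h$ with $q_h$ a monomial, as in Lemma~\ref{lem:denom}. The first step is a homogeneity observation: substituting $X_{h_0} = t$ for every $h_0 \in H_0$ turns $f_h$ into a Laurent polynomial $\hat f_h(t)$ in the single variable $t$, and these satisfy $\hat f_h \hat f_{h+w} = \sum_i \alpha^{(i)}_h \hat f_{h+v_i}\hat f_{h+u_i}$. Since all coefficients are positive there is no cancellation of leading terms, so the top $t$-degree $R_h := \deg_t \hat f_h$ obeys $R_h + R_{h+w} = \max_i(R_{h+v_i}+R_{h+u_i})$ with $R_{h_0} = 1$ on $H_0$; because $v_i + u_i = w$, an immediate induction with respect to the semi-order of Definition~\ref{defi:order} gives $R_h \equiv 1$. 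As merging-free positivity prevents the substitution from lowering total degrees, $R_h = \deg p_h - \deg q_h$, whence $\deg p_h = \deg q_h + 1$ and $\deg f_h = \deg q_h + 1$. Thus it suffices to bound $\deg q_h$.

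Next I would transport the problem to the covering lattice. By Proposition~\ref{prop:reduc} the one-dimensional equation is the $\varphi$-reduction of a nonautonomous Hirota-Miwa or discrete BKP equation on $L = \mathbb{Z}^3$, where $L' = \mathbb{Z}$ and $\ker\varphi$ has rank $2$; moreover $f'_{\varphi(h)} = f_h|_{X_{h_0} = X'_{\varphi(h_0)}}$ and $H_0 = \varphi^{-1}(H'_0)$. Since merging variables under $\varphi$ preserves the total degree of a positive polynomial, for any lift $h(m)$ of $m$ we get $\deg f'_m \le \max(\deg p_{h(m)}, \deg q_{h(m)}) = \deg q_{h(m)} + 1$. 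By Lemma~\ref{lem:hmbkp}, $q_h = \prod_{h_0 \in H_0,\, h_0 \le h+w} X_{h_0}$, so $\deg q_{h(m)} = \#\{h_0 \in H_0 : h_0 \le h(m)+w\}$, with $\le$ the semi-order of Definition~\ref{defi:order}. Taking $h(m) = h_\ast + m e$ with $\varphi(e) = 1$, everything reduces to the growth of this count.

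The main obstacle is exactly this lattice-point estimate, and I would organize it as follows. The past set $\{x \in L : x \le h(m)+w\}$ is the translate $(h(m)+w) - S$ of the pointed cone $S = \operatorname{span}_{\mathbb{Z}_{\ge 0}}(v_i,u_i)$, while $H_0 = \varphi^{-1}(H'_0)$ is a finite union of cosets of the rank-$2$ lattice $\ker\varphi$, one for each of the finitely many points of $H'_0$. Intersecting one such coset (a rank-$2$ affine sublattice) with $(h(m)+w) - S$ yields the lattice points of a two-dimensional polytope: it is bounded because $\ker\varphi$ meets the cone $S$ only at the origin (the reduction condition makes $\varphi(v_i),\varphi(u_i)$ $\mathbb{Z}_{\ge 0}$-independent, exactly as used in the proof of Proposition~\ref{prop:reduc}), and because $H$ is a good domain; its linear dimensions grow proportionally to $m$ as the apex $h(m)+w$ recedes linearly. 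An elementary volume comparison bounds the number of its lattice points by $O(m^2)$, and summing over the finitely many cosets keeps the bound $O(m^2)$. Hence $\deg q_{h(m)} = O(m^2)$ and $\deg f'_m = O(m^2)$ over $\mathbb{R}$ with positive parameters.

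Finally I would remove the positivity and the field restriction. By Theorems~\ref{thm:mainhm} and~\ref{thm:mainbkp} a Laurent-property equation is gauge-equivalent to an autonomous one with an arbitrary nonzero constant parameter, and gauge transformations preserve the degrees of the iterates (as noted after Proposition~\ref{prop:gauge}); so it is enough to treat the autonomous equation, for which the value of the nonzero constant is immaterial to the degree. For an arbitrary base field $k$, the iterate is obtained from the positive-real one by the same sequence of algebraic operations, and the positive-real computation is cancellation-free and therefore of maximal degree; passing to any other field or parameter value can only introduce cancellations, never raising the degree of numerator or denominator. Thus $\deg f'_m \le O(m^2)$ over every $k$, and consequently $\frac{1}{m}\log(\deg f'_m) \to 0$, i.e.\ the algebraic entropy is zero.
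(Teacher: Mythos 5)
Your proposal follows essentially the same route as the paper's proof: lift the problem to the covering lattice $\mathbb{Z}^3$ via Proposition~\ref{prop:reduc}, use Lemma~\ref{lem:hmbkp} to identify the degree of the denominator with the number of initial points $h_0 \le h+w$, bound that number by the lattice-point count of a two-dimensional cone section whose diameter grows linearly in $m$, and dispose of general fields by observing that specialization/reduction can only lower degrees. Your homogenization trick (substituting $X_{h_0} = t$ for all $h_0$ and tracking the top $t$-degree, which positivity keeps cancellation-free) is a genuine improvement in one respect: it actually proves the assertion $\deg p_h = \deg q_h + 1$, which the paper states without proof.

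There is, however, one genuine gap. Every tool you use upstairs --- Lemma~\ref{lem:denom}, Lemma~\ref{lem:hmbkp}, and the gauge-to-autonomous step --- presupposes that the \emph{three-dimensional lift} has the Laurent property (equivalently, that its coefficients satisfy condition (2) of Theorem~\ref{thm:mainhm} or~\ref{thm:mainbkp}). Your hypothesis gives the Laurent property only for the one-dimensional reduction, and Proposition~\ref{prop:reduc} transfers the Laurent property \emph{downward} (from $L$ to $L'$), never upward. Consequently your sentence ``by Theorems~\ref{thm:mainhm} and~\ref{thm:mainbkp} a Laurent-property equation is gauge-equivalent to an autonomous one'' is circular if the equation meant is the lift (its Laurent property is exactly what has not been established), and out of scope if it is the one-dimensional equation itself (those theorems concern equations on $\mathbb{Z}^3$; for reductions the paper proves only the equivalence ``Laurent property $\Leftrightarrow$ (\ref{eq:aabb})'' in Proposition~\ref{prop:reduchm}, and it deliberately performs the gauge transformation upstairs rather than claiming a one-dimensional gauge to an autonomous form). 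The missing bridge is precisely Proposition~\ref{prop:reduchm}: the Laurent property of the reduced equation forces its coefficients to satisfy (\ref{eq:aabb}); since the lifted coefficients $\gamma_h = \alpha_{\varphi(h)}$, $\delta_h = \beta_{\varphi(h)}$ are $\ker\varphi$-invariant, they then satisfy (\ref{eq:aabbhm}) on $\mathbb{Z}^3$; Theorem~\ref{thm:mainhm} now yields both that the lift has the Laurent property ((2)$\Rightarrow$(1)) and that it is gauge-equivalent to the autonomous equation ((2)$\Rightarrow$(3)), and since gauge transformations preserve degrees, all of your subsequent steps become legitimate. With this insertion your argument closes and coincides with the paper's proof.
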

\begin{proof}
We only show the case of a reduction of the Hirota-Miwa equation since the proof remains valid for reductions of the discrete BKP equation.

A reduction of the Hirota-Miwa equation has the form
\begin{equation}\label{eq:gr1}
	f_m = \dfrac{\alpha_m f_{m-a}f_{m-l+a} + \beta_m f_{m-b}f_{m-l+b}}{f_{m-l}},
\end{equation}
where $0 < a < b < l$ are positive integers.
We may assume that $f_0 = X_0, \ldots, f_{l-1} = X_{l-1}$ are the initial values.

Let $\varphi \colon \mathbb{Z}^3 \to \mathbb{Z}$ be a $\mathbb{Z}$-linear map that gives a reduction from the Hirota-Miwa equation and $H = \varphi^{-1}(\mathbb{Z}_{\ge 0})$.
Then the initial domain for $H$ is $H_0 = \varphi^{-1}(\{ 0, \ldots, l-1 \})$.
Let $\gamma_h = \alpha_{\varphi(h)}, \delta_h = \beta_{\varphi(h)}$ and let
\begin{equation}\label{eq:gr2}
	g_{h} = \begin{cases}
		\dfrac{\gamma_{h} g_{h+v_1} g_{h+u_1} + \delta_{h} g_{h+v_2} g_{h+u_2}}{g_{h+w}} & (h \in H \setminus H_0), \\
		X_h & (h \in H_0)
	\end{cases}
\end{equation}
be the corresponding initial value problem.
It is sufficient to show that $\deg g_h$ has at most order $\mathcal{O}(m^2)$ for $\varphi(h) = m$ since Proposition~\ref{prop:reduc} implies that $\deg f_m \le \deg g_h$.

First let us reduce to the case $\gamma_h = \delta_h = \alpha_m = \beta_m = 1$.
Since (\ref{eq:gr1}) has the Laurent property, it follows from Proposition~\ref{prop:reduchm} that $\alpha_m$ and $\beta_m$ satisfy (\ref{eq:aabb}), where $v_1 = -a, u_1 = -l+a, v_2 = -b, u_2 = -l+b, w = -l$.
By construction, $\gamma_h$ and $\delta_h$ satisfy $(\ref{eq:aabbhm})$ and thus, by Theorem~\ref{thm:mainhm}, there exists a gauge transformation of the Hirota-Miwa equation which transforms $\alpha_h$ and $\beta_h$ to $1$.
Furthermore, as explained in \textsection\ref{sec:gauge}, a gauge transformation does not change the degree of the iterates, and without loss of generality, we can reduce to the case $\gamma_h = \delta_h = \alpha_m = \beta_m = 1$:
\begin{align*}
	f_m &= \dfrac{f_{m-a}f_{m-l+a} + f_{m-b}f_{m-l+b}}{f_{m-l}}, \\
	g_h &= \frac{g_{h+v_1} g_{h+u_1} + g_{h+v_2} g_{h+u_2}}{g_{h+w}}.
\end{align*}

Since the operation that reduces a Laurent polynomial by a prime number does not increase its degree, it is sufficient to prove the theorem under the condition that $k$ has characteristic $0$.
Therefore, we can use the lemmas shown in this section.

For $m_0 \in \{ 0, \ldots, l-1 \}$ and a sufficiently large integer $m$, we define $e^{(m_0)}_m$ by
\[
	e^{(m_0)}_m = \# \{ h_0 \in H_0 \, | \, h_0 \le h + w, \varphi(h_0) = m_0 \},
\]
where $\le$ is the semi-order on $\mathbb{Z}^3$ defined in Definition~\ref{defi:order} and $h \in \mathbb{Z}^3$ satisfies $\varphi(h) = m$.
Lemma~\ref{lem:hmbkp} implies that the degree of the denominator of $g_h$ coincides with
\[
	\sum^{l-1}_{m_0 = 1} e^{(m_0)}_m.
\]
Since the degree of the numerator of $g_h$ is always one higher than the degree of the denominator, it is sufficient to show that the growth of $e^{(m_0)}_m$ is at most quadratic.

Fix $m_0 \in \{ 0, \ldots, l-1 \}$ and denote $e^{(m_0)}_m$ by $e_m$.
Let $\varphi_{\mathbb{R}} \colon \mathbb{R}^3 \to \mathbb{R}$ be the $\mathbb{R}$-linear extension of $\varphi$, $P = \varphi^{-1}_{\mathbb{R}}(m_0) \subset \mathbb{R}^3$ and $\sigma = \operatorname{span}_{\mathbb{R}_{\ge 0}}(v_1, u_1, v_2, u_2) \subset \mathbb{R}^3$.
Then, the convex cone $h + w + \sigma$ is split by $P$ into two parts.
Let $A_h$ be the bounded part.
$A_h$ is a square pyramid with base $A_h \cap P$, and its height is proportional to $m - l - m_0$ (cf. Figure~\ref{fig:fig}).
Since
\[
	\{ h_0 \in H_0 \, | \, h_0 \le h + w, \varphi(h_0) = m_0 \} \subset A_h \cap P \cap \mathbb{Z}^3,
\]
$e_m$ is equal to or smaller than the number of lattice points contained in $A_h \cap P$.
Under the limit $m \to +\infty$, the number of lattice points in $A_h \cap P$ has the same growth as the area of $A_h \cap P$.
Since the height of $A_h$ has order $\mathcal{O}(m)$, the growth of the area of $A_h \cap P$ is quadratic.
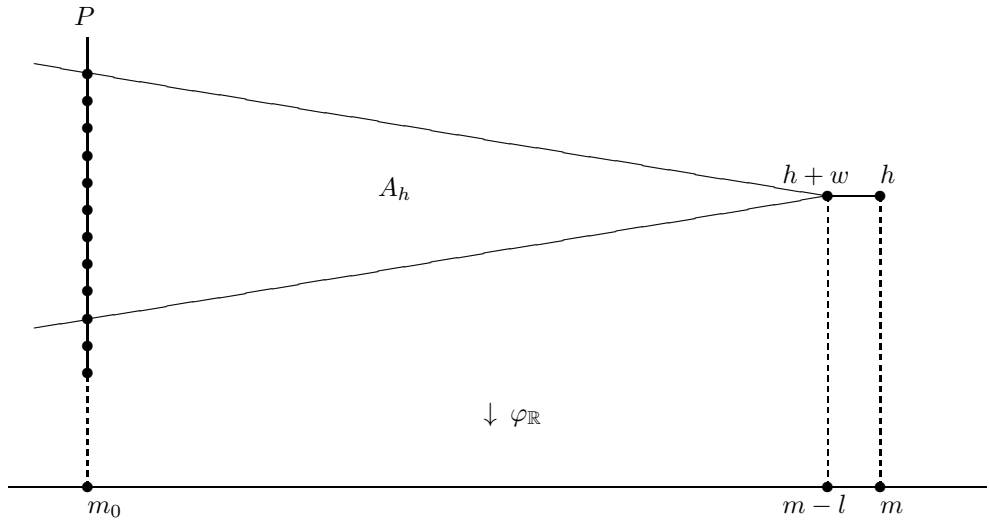
\begin{figure}
{\centering
\begin{picture}(400, 200)
	\put(20, 10){\line(1, 0){370}}
	
	\put(50, 10){\circle*{4}}
	\put(50, 50){\line(0, 1){130}}
	\multiput(50, 53)(0, 10.3){12}{\circle*{4}}
	\put(50, 0){$m_0$}
	
	\put(330, 10){\circle*{4}}
	\put(330, 120){\circle*{4}}
	\put(313, 125){$h+w$}
	\put(313, 0){$m-l$}
	
	\put(350, 10){\circle*{4}}
	\put(350, 120){\circle*{4}}
	\put(350, 0){$m$}
	\put(350, 125){$h$}
	
	\put(330, 120){\line(1, 0){20}}
	\put(330, 120){\line(-6, 1){300}}
	\put(330, 120){\line(-6, -1){300}}
	
	\multiput(50, 10)(0, 4){10}{\line(0, 1){2}}
	\multiput(330, 10)(0, 4){28}{\line(0, 1){2}}
	\multiput(350, 10)(0, 4){28}{\line(0, 1){2}}
	
	\put(160, 120){$A_h$}
	\put(45, 185){$P$}
	
	\put(200, 35){$\downarrow$}
	\put(210, 35){$\varphi_{\mathbb{R}}$}

\end{picture}
\par}
\caption{Schematic representation of the situation in which the degree growth of the equation is reduced to the number of lattice points contained in $A_h \cap P$.}\label{fig:fig}
\end{figure}
\end{proof}


\section{Conclusion}

In this paper we gave proofs of the theorems in \cite{rims} and discussed the Laurent property for nonautonomous systems.
First we gave elementary proofs of the Laurent property for the Hirota-Miwa equation and the discrete BKP equation.
We used the irreducibility and coprimeness of the iterates and we commented on them in \textsection\ref{sec:gauge} and Corollary~\ref{cor:irred}.
These concepts are thought to be closely related to integrability \cite{copr}.
Next we showed that a reduction and a gauge transformation of a discrete bilinear equation preserve the Laurent property.
Using these techniques we gave the explicit condition on the coefficients of discrete bilinear equations for them to possess the Laurent property.
Finally, we investigated the denominators of the iterates of an equation with the Laurent property and we calculated its algebraic entropy.
We showed that the degree of each initial value satisfies a $(\max, +)$-equation like (\ref{eq:dh}), which is in fact the ultra-discretization of the equation \cite{ud}.
Investigating this $(\max, +)$-equation and relying our results on reductions and gauge transformations, we showed that any reduction to a one-dimensional lattice of a nonautonomous Hirota-Miwa or discrete BKP equation that possesses the Laurent property, has zero algebraic entropy.

Throughout the paper we refrained from using the caterpillar lemma.
Although it is very powerful the caterpillar lemma is so complicated that we feel that when using it to show the Laurent property, we can hardly see the essential points of the proof.
Moreover, recent studies have shown that there exist many equations that have the Laurent property but are not amenable to the caterpillar lemma \cite{hk, exthv}.
Therefore, it seems to be important to investigate the Laurent property without recourse to this lemma.
Recently the usefulness of the Laurent property has been recognized in the field of discrete integrable systems.
One of the most interesting recent results is that many important systems, including discrete Painlev\'{e} equations, can be obtained as the coefficients of nonautonomous equations with the Laurent property \cite{hi, okubo}.
The Laurent property for such nonautonomous discrete systems, however, has not been well understood yet.
In the future, we intend to study general nonautonomous equations (not necessarily bilinear, even non-amenable to the caterpillar lemma) with the Laurent property.


\section*{acknowledgement}

I wish to thank Prof.~R. Willox for useful comments.
This work was partially supported by a Grant-in-Aid for Scientific Research of Japan Society for the Promotion of Science ($25 \cdot 3088$); and by the Program for Leading Graduate Schools, MEXT, Japan.


\end{document}